\let\OLDthebibliography\thebibliography
\renewcommand\thebibliography[1]{
  \OLDthebibliography{#1}
  \setlength{\parskip}{0pt}
  \setlength{\itemsep}{0pt plus 0.3ex}
}
\newcommand*\wrapletters[1]{\wr@pletters#1\@nil}
\def\wr@pletters#1#2\@nil{#1\allowbreak\if&#2&\else\wr@pletters#2\@nil\fi}
\newtheorem{theorem}{Theorem}
\newtheorem{lemma}[theorem]{Lemma}
\newtheorem{inv}[theorem]{Invariant}
\newcommand{\qed}{\hfill $\blacksquare$}
\newcommand{\qedClaim}{\hfill \ensuremath{\Box}}
\newenvironment{proof}{\noindent {\bf Proof.}\ }{\qed\par\vskip 0mm\par}
\newcommand{\B}{\vspace*{-\smallskipamount}}
\newcommand{\BB}{\vspace*{-\medskipamount}}
\newcommand{\BBB}{\vspace*{-\bigskipamount}}
\newcommand{\BigO}[0]{{\cal O}\xspace}
\newcommand{\Keywords}[1]{\par\noindent
{\small{\em Keywords\/}: #1}}
\begin{document}

\title{On Detecting Termination in Cognitive Radio Networks\footnote{\textbf{Accepted in International Journal of Network Management (Wiley-IJNM).}}\footnote{A preliminary version of this paper has appeared in proceeding of the $17^{th}$ Pacific Rim International Symposium on Dependable Computing 2011 (PRDC)~\cite{DBLP:conf/prdc/SharmaS11}. 
}}
\date{}
\author{Shantanu Sharma\thanks{Department of Computer Science, Ben-Gurion University of the Negev, Israel. Email: \texttt{sharmas@cs.bgu.ac.il}.} \quad Awadhesh Kumar Singh\thanks{Department of Computer Engineering, National Institure of Technology, Kurukshetra, Haryana, India. Email: \texttt{aksinreck@ieee.org}.}}

\maketitle

\begin{abstract}
The cognitive radio networks are an emerging wireless communication and computing paradigm. The cognitive radio nodes execute computations on multiple heterogeneous channels in the absence of licensed users (a.k.a. primary users) of those bands. Termination detection is a fundamental and non-trivial problem in distributed systems. In this paper, we propose a termination detection protocol for multi-hop cognitive radio networks where the cognitive radio nodes are allowed to tune to channels that are not currently occupied by primary users and to move to different locations during the protocol execution. The proposed protocol applies credit distribution and aggregation approach and maintains a new kind of logical structure, called the \emph{virtual tree-like structure}. The \emph{virtual tree-like structure} helps in decreasing the latency involved in announcing termination. Unlike conventional tree structures, the \emph{virtual tree-like structure} does not require a specific node to act as the root node that has to stay involved in the computation until termination announcement; hence, the root node may become idle soon after finishing its computation. Also, the protocol is able to detect the presence of licensed users and announce strong or weak termination, whichever is possible.
~\\

\Keywords{Cognitive radio network, credit distribution and aggregation, heterogeneous channels, termination detection, virtual partitioning and merging, virtual tree-like structure.}
\end{abstract}

\pagenumbering{arabic}
\setcounter{page}{1}

\section{Introduction}
\label{section:introduction}
A vast growth of small and portable devices has culminated into the problem of bandwidth scarcity. Hence, it is becoming difficult to provide seamless connectivity while executing various applications, \textit{e}.\textit{g}., email, web surfing, gaming, and video conferencing (see Exhibit 10 in~\cite{FCCreport1}). It is also noted that currently allocated spectrums have their significant portions underutilized~\cite{url1}. The Cognitive Radio Networks (\textit{CRN}s)~\cite{DBLP:journals/cn/FortunaM09} are a smart solution with a complex network structure to enhance the spectrum utilization.

The termination detection~\cite{dijkstra1980termination,Francez:1980:DT:357084.357087} is a fundamental and non-trivial problem in distributed systems because the processors do not have the complete knowledge of all the other processors in the network, and there is no global clock in the distributed computing environment. A solution to the termination detection problem informs termination of the task being executed in the network.

\subsection{Cognitive radio networks}
\label{subsec:The cognitive radio network}
A cognitive radio network (see~\cite{Akyildiz:2009:CCR:1508337.1508834,1391031,Cesana:2011:RCR:1930071.1930095,Wang:2011:ECR:2288683.2294147,Akyildiz:2006:NGS:1162469.1162470,4481339,5639025}) is a collection of heterogeneous cognitive radio nodes (or processors), called secondary users. The cognitive radio nodes (CRs) have sufficient computing power and power backup to operate on multiple heterogeneous channels (or frequency bands) in the absence of the licensed user(s), termed as primary user(s), of the respective bands. The cognitive radio nodes have the \textit{LEIRA} (learning, efficiency, intelligence, reliability, and adaptively) capability to scan and operate on different channels.

A channel that is not currently occupied by a primary user is called an \textit{available channel}. Any two nodes that are in the transmission range of each other and tuned to a common available channel during an identical time interval, are called \textit{neighboring nodes}. The appearance of a primary user (hereafter, the primary users will be known as PUs) on an available channel is a reason for CRs to switch the channel and to tune to another available channel, because the CRs are not allowed to interrupt the primary users in any case.

The modern networking benchmark, \textit{CRN}, presents many unique challenges in the field of communication as well as computing, such as cognitive capability, reliability, and efficiency. Several challenges of \textit{CRN}s are presented in~\cite{Cesana:2011:RCR:1930071.1930095}. Interested readers may refer to~\cite{wyglinski2009cognitive,Mahmoud-2007-book,Hossain-2009-book,Liotta-2013} for more details on cognitive radio networks.

In this paper, unless otherwise indicated, the words \enquote{cognitive radio node,} \enquote{cognitive radio,} \enquote{node,} and \enquote{processor} have the same meaning, and similarly, the words \enquote{cognitive radio network,} \enquote{network,} and \enquote{system} have been treated as synonyms.

\subsection{Termination detection (in cognitive radio networks)}
\label{subsec:Termination detection in the cognitive radio networks}
Nowadays, a large number of distributed applications --– \textit{e}.\textit{g}., mutual exclusion, leader election, checkpointing, global state detection~\cite{Kshemkalyani:2008:DCP:1374804} --– are executed on portable devices. In general, an application that executes on processors is known as a \textit{normal computation} or an \textit{underlying computation}. A termination detection (TD) protocol~\cite{dijkstra1980termination,Francez:1980:DT:357084.357087} is used to announce termination of the normal computation. The termination declaration of a normal computation, when it has indeed terminated --- in a group of mobile devices that are geographically distributed and tuned on different channels --- is an interesting challenge in cognitive radio networks. Hereafter, we use the word \enquote{computation} that refers to the \enquote{normal computation.}

A node may be in \textit{active} or \textit{passive} state during a computation. The nodes in the active state are called \textit{active nodes}, and the nodes in the passive state are called \textit{passive nodes}. The active nodes execute an assigned computation, and usually, after completion of the computation, they become passive. A passive node can become active on reception of a message from an active node. Hence, it is clear that only active nodes can send messages; however, both the active and passive nodes can receive messages at any time.

Initially, all the nodes are passive in the network. Since only active nodes can send messages, we assume that there exist a passive node that becomes active on reception of a message from outside world, and subsequently it initiates the computation. A computation is said to be terminated if and only if all the nodes are passive and there is no message in-transit. A brief summary about TD protocols can be found in Chapter 7 of~\cite{Kshemkalyani:2008:DCP:1374804} and Chapter 9 of~\cite{ghosh2010distributed}.

Any termination detection protocol can be initiated in two ways, as follows:
\begin{itemize}[noitemsep]
  \item \textit{Delayed initiation.} The TD protocol is triggered by any node, $i$, that has been assigned a computation, and the same node $i$ is responsible for the announcement of termination; such an initiation is known as delayed initiation~\cite{DBLP:journals/jpdc/MittalFVP08}. Here, it is not mandatory that the node $i$ was also the initiator of the computation; refer to Figure~\ref{fig:two_way_of_initation_first}, where node 1 initiates the computation and node 3 initiates the termination detection protocol.
  \item \textit{Concurrent initiation.} In the concurrent initiation, the TD protocol is overlaid on a computation and executes concurrently. Here, the initiator of the computation is also responsible for the announcement of termination; refer to Figure~\ref{fig:two_way_of_initation_second}, where the lower part represents the execution of the computation and the upper part represents the execution of termination detection protocol that is being executed concurrently with the computation; and node 1 is the initiator of both, the computation and the termination detection protocol.
\end{itemize}

Any termination detection protocol should satisfy the following properties:
\begin{itemize}[noitemsep]
  \item \textit{No false termination detection} (\textit{safety}). The termination of a computation is declared only when the computation has indeed terminated (and only a single designated node can announce termination).
  \item \textit{Eventual termination detection} (\textit{liveness}). A single (designated) node announces termination within a finite amount of time.
\end{itemize}

The termination detection in \textit{CRN} is more challenging as compared to the conventional wireless networks because of the following reasons:
\begin{figure}[t]
\centering
    \begin{minipage}[b]{.45\textwidth}
        \centering
          \includegraphics[width=60mm, height=60mm]{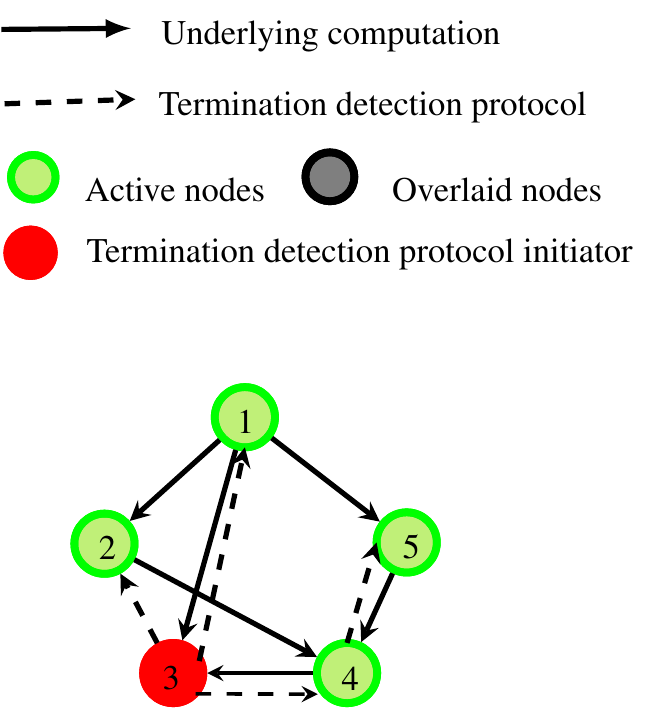}
       \subcaption{Delayed initiation.}
       \label{fig:two_way_of_initation_first}
    \end{minipage}
    \begin{minipage}[b]{.45\textwidth}
        \centering
         \includegraphics[width=60mm, height=60mm]{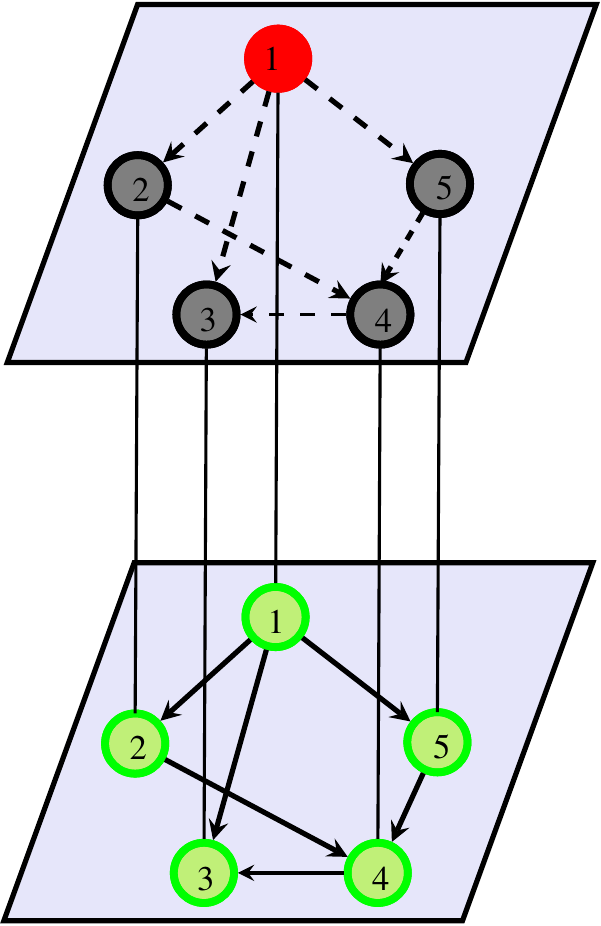}
        \subcaption{Concurrent initiation.}
        \label{fig:two_way_of_initation_second}
    \end{minipage}
\B
\caption{Two ways of termination detection protocol's initiations.}
\BBB
\label{fig:two_way_of_initation}
\end{figure}

\begin{itemize}[noitemsep]
  \item \textit{Network structure and communication links.} The \textit{CRN} is a network of time and space varying channels. Any two neighboring nodes, which must be tuned to an identical available channel, can communicate directly (using a communication link). The appearance of a PU on a channel forces the neighboring CRs to vacate that channel and to tune to another identical available channel. However, finding another identical alternative available channel, for neighboring CRs, is not an easy task due to reasons like topological dynamics and varying capabilities of the nodes~\cite{DBLP:conf/wasa/BansalMV08}. Hence, the communication link endurance during execution of any protocol is hard to guarantee.

  \item \textit{Reaction to a communication link break.} In classical wireless networks, the nodes operate on a pre-decided channel that provides them communication links. The communication links may break due to node mobility or node failure; hence, a new communication link detection takes place in a highly reactive manner without considering parameters like endurance of the link. On the other hand, in \textit{CRN}, the communication links may also break due to the appearance of a PU leading to \textit{spectrum mobility} that emphasizes on several factors before creating the new communication links~\cite{Akyildiz:2006:NGS:1162469.1162470}.

  \item \textit{Sufficient resources.} The computing nodes in wireless domain suffer from limited resources like bandwidth, memory, and battery power. Thus, several protocols focus on the reduction of the number of messages exchanged to minimize the need of bandwidth, memory, and battery power. However, \textit{CRN}s have sufficient resources, especially temporarily unused spectrums (known as \textit{spectrum holes}~\cite{Akyildiz:2006:NGS:1162469.1162470}) and computing power. Consequently, the focus of research has been shifted to other challenges related to the execution of various applications.

  \item \textit{No definitive logical structure.} Most of the computing protocol use quasi-stable logical structures, \textit{e}.\textit{g}., tree, ring, to leverage the design difficulties. The \textit{CRN}s restrict a direct engagement of such logical structures due to time and space varying channels.
\end{itemize}
In addition, unlike other ad hoc networks, the CRs show very loose synchronization, poor tolerance to the heterogeneity of mobile devices as well as channels, and an extra cost for searching a new channel (on the appearance of primary users). The presence of these challenges in cognitive radio networks make the design of computing and communication protocols harder.

\subsection{Our contribution and outline of the paper}
\label{subsec:Our contribution}

The paper presents a concurrent initiation (see Figure~\ref{fig:two_way_of_initation_second}) based Termination detection protocol for Cognitive RAdio Networks, called \textit{T-CRAN}, henceforth. Moreover, our protocol can also be implemented in other dynamic networks, \textit{e}.\textit{g}., cellular networks, mobile ad hoc networks (MANETs), vehicular ad hoc networks (VANETs). In this paper, we provide:

\begin{enumerate}[noitemsep]
  \item A credit distribution and aggregation based termination detection protocol for \textit{CRN}, in Section~\ref{section:The T-CRAN Protocol}, that declares termination of computations despite the presence of PUs. Our protocol recognizes the cognitive radio nodes that lose their single available channel due to the appearance of PUs and are unable to find other available channel.

  \item A new logical structure, called \textit{virtual tree-like structure} (Figure~\ref{fig:The virtual tree-like structure.}), where the root node can be passive when it completes its computation, unlike conventional (logical) tree structures, where it is mandatory for the root node to stay in active state till the end of computation; in Section~\ref{section:The T-CRAN Protocol}.

  \item The \textit{T-CRAN} protocol as guarded-actions, in Section~\ref{section:T-CRAN in Guard-Action Paradigm}. Section~\ref{section:The Working of the T-CRAN} explains the complete working of the proposed protocol.

  \item The analysis of message and time complexities of the proposed protocol, in Appendix~\ref{section:Complexity Analysis}. The correctness proofs of the proposed protocol are given in Appendix~\ref{section:correctness_proof}.
\end{enumerate}

\subsection{Related work}
\label{subsection:Related Work}
The termination detection (TD) protocol has been studied extensively in static distributed systems~\cite{dijkstra1980termination,Francez:1980:DT:357084.357087,DBLP:journals/jpdc/ChandrasekaranV90,DBLP:journals/jise/HuangK91,DBLP:journals/ipl/Mattern89,DBLP:journals/toplas/MisraC82,DBLP:journals/ipl/Topor84,DBLP:conf/gg/GodardMMS02}. A detailed classification of TD protocols is given in~\cite{Kshemkalyani:2008:DCP:1374804,matocha1998taxonomy}. However, none of the existing TD protocols for static networks can be implemented straight forwardly in dynamic networks due to frequent topology changes in dynamic networks. Although, some TD protocols~\cite{DBLP:journals/tpds/TsengT01,DBLP:journals/tpds/DeMaraTE07,DBLP:conf/isads/KurianRS09,DBLP:conf/icdcs/JohnsonM09,6014939} exist for sensor networks and mobile ad hoc networks, they can also not be implemented in \textit{CRN}s due to unique challenges of cognitive radio networks, as mentioned in Section~\ref{subsec:Termination detection in the cognitive radio networks}.

A novel algorithm for TD using credit distribution and aggregation was proposed by Mattern~\cite{DBLP:journals/ipl/Mattern89} and Huang~\cite{DBLP:conf/icdcs/Huang89,DBLP:journals/jise/HuangK91}. A similar TD protocol for faulty distributed systems was proposed by Tseng~\cite{DBLP:journals/jpdc/Tseng95}. However, these protocols failed to work in dynamic networks. A TD protocol for mobile cellular networks~\cite{DBLP:journals/tpds/TsengT01} based on credit distribution and aggregation is proposed that assumes the existence of the mobile switching center (MSS), which provides a centralize support to the mobile nodes.

Johnson and Mittal~\cite{DBLP:conf/icdcs/JohnsonM09} have tried to reduce the waiting time for termination declaration in dynamic networks. However, they consider the existence of an initiator node until termination declaration. The protocols proposed for dynamic networks~\cite{DBLP:journals/tpds/TsengT01,DBLP:conf/isads/KurianRS09,DBLP:conf/icdcs/JohnsonM09,6014939} have three major limitations: (\textit{i}) they assume the existence of an initiator node until termination declaration; however, the mandatory existence of the initiator node increases the waiting time for the node that has completed its computation earlier than other nodes in the network. Also, the existence of an initiator node until termination declaration is not easy to guarantee in \textit{CRN}s, (\textit{ii}) they work on a single pre-decided channel, whereas, in \textit{CRN}, computations and nodes work on multi-channels, and (\textit{iii}) they consider only node mobility; they do not consider the presence of some special users (like primary users) that also prevent the nodes to work.

In \textit{CRN}, Mittal et al.~\cite{DBLP:conf/ispdc/ZengMVC10} presents a neighbor discovery protocol with TD; however, they consider only termination of the particular neighbor discovery scheme. The lightweight termination detection of Mittal et al.~\cite{DBLP:conf/ispdc/ZengMVC10} is not related to our termination detection scheme. Note that in~\cite{DBLP:conf/ispdc/ZengMVC10}, the term \enquote{lightweight} has been used to highlight the fact that the number of control messages used in their protocol is minimal.

\section{The System Settings}
\label{section:the_system_model}
This section outlines the preliminary assumptions about the environment, various types of messages (Table~\ref{tab:messages}), and data structures (Table~\ref{table:datastructure}). All the notations used in our protocol are given in Table~\ref{table:notations}.

\begin{description}
\item[Cognitive radio nodes.] We consider a cognitive radio network of $N$ cognitive radio nodes ($\mathit{CR}_1, \mathit{CR}_2, \ldots, \mathit{CR}_N$), where each node has a unique identity. However, a group of $n$ CRs executes a single computation, where $n \leq N$, in finite time. The nodes are heterogeneous in terms of their computing capabilities, and they are allowed to move during protocol execution.

    Each CR is aware of \textit{global channel set}, \textit{local channel set}, to be defined soon, and also the total number of nodes, $N$, in the network. Each node has a \textit{scan transceiver} (a transceiver is a transmitter-receiver pair) that is responsible for scanning multiple heterogeneous channels. Such a scanning is beneficial for fast channel switching. However, a transceiver cannot transmit and receive simultaneously.

\item[Communication channels.] We divide communication channels into two sets: (\textit{i}) \textit{global channel set} ($\mathit{GCS}$): a set of all the, $g$, channels in the network, where $g = |GCS|$; (\textit{ii}) \textit{local channel set} ($\mathit{LCS}$): a set of, $l$, available channels\footnote{Recall that a channel that is not currently occupied by a primary user is known as an available channel.} at a node, $\mathit{CR}_i$, where $l_i=|LCS_i|$ and $l_i\leq g$. However, the appearance of PU(s) on all $g$ channels results in the value of the local channel set to be zero, at each node. On the appearance of a PU, a CR is assumed to tune to another available channel, from its $\mathit{LCS}$, without interrupting the ongoing computation~\cite{Akyildiz:2006:NGS:1162469.1162470}, similar to the handoff in mobile cellular networks.

    A node, $\mathit{CR}_i$, that does not possess any available channel in its $LCS_i$ (\textit{i}.\textit{e}., $l_i = 0$) due to the appearance of PU(s), is called an \textit{affected node}. An affected node is unable to send and receive messages. On the other hand, a node, $\mathit{CR}_j$, that has at least one available channel in its $LCS_j$ (\textit{i}.\textit{e}., $l_j \geq 1$) is called a \textit{non-affected node}. The communication channels are non-FIFO (first-in-first-out) and unreliable. However, the sent messages must be received at the receiver nodes without omissions, duplications, and in the same order as they were sent~\cite{DBLP:conf/sss/DolevHSS12}, if the receiver is not an affected or a failed node (see failure model for details).

\item[Network structure.] We consider an asynchronous multi-hop cognitive radio network of $N$ independent nodes. We represent the network by a \textit{communication graph}, $\mathcal{CG} = [\mathcal{V}, \mathcal{E}, LCS]$. In the communication graph, $\mathcal{CG}$, $\mathcal{V}$ represents a set of vertices (or processors in the network), $\mathcal{E}$ represents a set of edges where an edge between a pair of neighboring nodes shows a bidirectional, direct, and non-FIFO wireless communication link, and $\mathit{LCS}$ represents the local channel set of each CR.

 Further, we define an \textit{interaction graph} of size $n \leq N$ as: $\mathcal{IG} = [v, e]$. In the interaction graph, $\mathcal{IG}$, $v \subseteq \mathcal{V}$ represents a set of CRs that are currently executing an identical computation and $e \subseteq \mathcal{E}$ represents a set of edges where each edge connects any two neighboring nodes, $\mathit{CR}_i, \mathit{CR}_j \in v$, if they are executing an identical computation. Note that we assume different interaction graphs for different computations.

\item[Failure model.] We assume that a cognitive radio node may fail in three different ways, as follows:
\begin{enumerate}[noitemsep,nolistsep]
  \item Due to the appearance of a PU and the node has only a single channel in its $\mathit{LCS}$, then the node is unable to send and receive messages, and such a node is called an affected node.
  \item Due to the swift movement of the node that may result in frequent topology change and transient non-interaction of the highly mobile node with other nodes in the network. We call such nodes the \textit{failed nodes}.
  \item Crash, \textit{i}.\textit{e}., when a node does not possess enough resources, like battery and computing power, it results in permanent failure of the node, and such a node is called a \textit{crashed node}. When a crashed node recovers by users' intervention, it does not possess the knowledge of updated data structures.
\end{enumerate}
In this protocol, we focus on the impact of PUs on the nodes, and after that the recovery of such nodes when PUs disappear. We do not consider any specific approach for recovery of failed nodes. The approach that works in MANET to handle failed nodes is also applicable in \emph{CRN}. In other words, we consider the \textit{failure-recovery model}~\cite{DBLP:journals/dc/AguileraCT00}. Whenever a node recovers, its state may be active or passive. It is possible that the failures occur frequently and, thereafter, the nodes recover soon. Such frequent failures and recoveries are not useful for any practical application; hence, we do not focus on these issues in our protocol. In addition, we assume that the affected and crashed nodes are detected by at least one of the nodes, whose state is active. We also assume that the nodes do not exhibit Byzantine behavior.

\item[Storage media.] The termination cannot be detected as the decision variable itself can be corrupted by transient failures leading to a false detection; hence, we store all the data structures in the non-volatile storage (\textit{i}.\textit{e}., stable storage). However, a consistent copy of the data is always available in the volatile memory. In the beginning, all the data structures are initialized.

\end{description}

%===============NOTATIONS===============NOTATIONS===============NOTATIONS===============NOTATIONS===============NOTATIONS===============NOTATIONS
%===============NOTATIONS===============NOTATIONS===============NOTATIONS===============NOTATIONS===============NOTATIONS===============NOTATIONS
\begin{table}[t]
\begin{center}
\begin{spacing}{0.85}
            \begin{tabular}{|l|l||l|l|}
            \hline
%            \multicolumn{10}{|c|}{Country List} \\
%            \hline
               {\scriptsize \textit{CRN}} & {\scriptsize Cognitive radio network} & {\scriptsize $N$} & {\scriptsize The total number of cognitive radio nodes} \\ \hline

               {\scriptsize $\mathcal{CG}$} & {\scriptsize Communication graph} & {\scriptsize $\mathcal{IG}$} & {\scriptsize Interaction graph} \\ \hline

               {\scriptsize $\mathcal{V}$} & {\scriptsize A set of cognitive radio nodes} & {\scriptsize $\mathcal{E}$} & {\scriptsize A set of edges between neighboring nodes} \\ \hline

               {\scriptsize $v$} & {\scriptsize A set of cognitive radio nodes in an interaction graph} &  {\scriptsize $e$} & {\scriptsize A set of edges in an interaction graph} \\ \hline

               {\scriptsize $C_E$} & {\scriptsize Chief executive node (or initiator of the protocol)} & {\scriptsize $n$} & {\scriptsize Number of nodes involved in an identical computation} \\ \hline

               {\scriptsize $\mathit{LCS}$} & {\scriptsize Local channel set} & {\scriptsize $\mathit{GCS}$} & {\scriptsize Global channel set} \\ \hline

               {\scriptsize $l$} & {\scriptsize Number of available channels at a node} & {\scriptsize $g$} & {\scriptsize Number of the channels in $\mathit{GCS}$} \\ \hline

               % & &\\ \hline
            \end{tabular}
\B
\caption{Notations.}
\BBB\BBB
\label{table:notations}
\end{spacing}
\end{center}
\end{table}

\begin{table}
\begin{center}
\begin{spacing}{0.85}
   \begin{tabular}{ | p{2.8cm} | l | p{10cm} |   }
    \multicolumn{3}{c}{{\scriptsize Control messages}} \\ \hline

    {\scriptsize \textit{COMputation message}}               & {\scriptsize $\mathit{COM}(C) $}               & {\scriptsize send by $\mathit{CR}_i$ to its active/passive neighboring nodes to distribute the computation.} \\ \hline

    {\scriptsize \textit{I am Passive with Credit message}}  & {\scriptsize $\mathit{ImPC}(C, b)$}           & {\scriptsize send by $\mathit{CR}_i$ to all the active nodes that had sent credits to $\mathit{CR}_i$ previously including the parent node of $\mathit{CR}_i$. An $\mathit{ImPC}(C,b)$ message contains credit information, $C$, and the total number of active child nodes, $b$, of the sender $\mathit{CR}_i$. The value of $b$ is set to 0 if the $ImPC$ is sent to nodes other than the parent nodes.} \\ \hline

    {\scriptsize \textit{I am Passive message}}           & {\scriptsize $\mathit{ImP}(p)$}                  & {\scriptsize send by $\mathit{CR}_i$ to all its child nodes piggybacked with a new parent's, $\mathit{CR}_p$, information.} \\ \hline

    {\scriptsize \textit{AcKnowledgement message}}           & {\scriptsize $\mathit{AcK}$   }               & {\scriptsize send by $\mathit{CR}_j$ to $\mathit{CR}_i$ in order to acknowledge credit receipt, if $\mathit{CR}_i$ has surrendered its credit to $\mathit{CR}_j$.} \\ \hline

    {\scriptsize \textit{Acknowledgement of $AcK$ message}}    & {\scriptsize $AAcK$ }                & {\scriptsize send by $\mathit{CR}_i$ to $\mathit{CR}_j$ if $\mathit{CR}_i$ has received an $AcK$ message from $\mathit{CR}_j$. The highest priority messages, \textit{i}.\textit{e}., $AcK$ and $AAcK$, provide a three way handshake when $\mathit{CR}_i$ surrenders its credit to $\mathit{CR}_j$. The reception of $AcK$ and $AAcK$ messages are assumed to be atomic (and the delivery time of $AcK$ and $AAcK$ messages is very small, unlike other control messages).} \\ \hline

    {\scriptsize \textit{Termination Message}}               & {\scriptsize $\mathit{TM}$}                   & {\scriptsize send by the chief executive node, $C_E$, to all the nodes of the interaction graph to declare termination of the computation.} \\ \hline

    \multicolumn{3}{c}{{\scriptsize Non-control messages}} \\ \hline

    {\scriptsize \textit{Primary user affected Nodes message}}    & {\scriptsize $\mathit{PaN}$ }                 & {\scriptsize send by $\mathit{CR}_i$ that is neighboring node of $\mathit{CR}_j$ to $C_E$. This message holds the identity of the affected node, $\mathit{CR}_j$, and the credit that had sent to $\mathit{CR}_j$ by $\mathit{CR}_i$ or from $\mathit{CR}_i$ to $\mathit{CR}_j$.} \\ \hline

    {\scriptsize \textit{Nodes released by Primary user message}} & {\scriptsize $\mathit{NaP}$ }                 & {\scriptsize send by $\mathit{CR}_i$ to $C_E$ and all its neighbors whose states are active. A $\mathit{NaP}$ message holds the identity of $\mathit{CR}_i$, that was an affected node earlier; however, now $\mathit{CR}_i$ is a non-affected node.}  \\ \hline

    \multicolumn{3}{p{11cm}}
    {{\scriptsize Remarks: (\textit{i}) We use a notation $\mathit{SEND}_i(m, j)$ to show the message transmission of $m$ from $\mathit{CR}_i$ to $\mathit{CR}_j$.
    (\textit{ii}) The message transmission is shown in Figures~\ref{fig:abstract_model_without_PU} and~\ref{fig:abstract_model_with_PU}.}}

    \\ \hline
    \end{tabular}
    \B
    \caption{Message types used in the \textit{T-CRAN} protocol.}
    \BBB\BBB
    \label{tab:messages}
\end{spacing}
\end{center}
\end{table}

Furthermore, we assume that the cognitive radio nodes have sufficient battery and computing power, and an appropriate routing protocol is in place for message delivery. For ease of presentation and understanding, we consider a single instance of a single computation (\textit{i}.\textit{e}., a single interaction graph, $\mathcal{IG}$) in the network; however, the proposed protocol is able to handle multiple instances of multiple computations. We do not specify any neighbor discovery protocol; however, we assume that each CR knows its neighboring nodes using some existing neighbor discovery protocols, \textit{e}.\textit{g}.,~\cite{DBLP:journals/jpdc/MittalKCVZ09}.

\medskip\noindent \textbf{Types of messages}.
In our protocol, we use various messages (message details are given in Table~\ref{tab:messages}, and a simplified illustration of messages transmission is shown in Figures~\ref{fig:abstract_model_without_PU} and~\ref{fig:abstract_model_with_PU}) that are classified into \textit{control messages} and \textit{non-control messages}. The control messages have the highest transmission priority, and they are forwarded by (intermediate) passive nodes too. It is worth noting that only control messages require communication cost, and non-control messages can be piggybacked on the control or heartbeat messages.

All the control and non-control messages include a tuple $\langle session, initiator\_id\rangle$, that (\textit{i}) avoids the need of a logical clock, which is hard to implement in \textit{CRN}, (\textit{ii}) distinguishes any two messages, (\textit{iii}) distinguishes a message from \textit{stale messages} (a message that is received after the termination declaration, and so belongs to the terminated computation, is known as a stale message, throughout the paper).

\medskip\noindent \textbf{Types of data structures}. In our protocol, the data structures are divided into two categories: (\textit{i}) at all the cognitive radio nodes, and (\textit{ii}) at the chief executive node, $C_E$, (a node that is responsible for the announcement of termination). Details of these data structures are given in Table~\ref{table:datastructure}.

\section{Background}
\label{section:background}
A large number of termination detection (TD) protocols have been introduced for fault-free and faulty distributed systems. They are based on different scheme, \textit{e}.\textit{g}., snapshot, credit distribution and aggregation, logical tree, and ring structures~\cite{matocha1998taxonomy}. The snapshot based TD protocols require complex data structures to be maintained at each participating node because the amount of information exchanged is usually very high. Consequently, they have a large waiting time for the announcement of termination that is unsuitable for ad hoc networks. On the other hand, the maintenance of logical structures (rings and trees) is a computation intensive task, and it requires frequent exchange of coordination messages to handle dynamic topology in ad hoc environment. Such a high overhead is deterrent in the use of logical structures. Therefore, we consider the credit distribution and aggregation approach to design a TD protocol for \textit{CRN}.

For the sake of completeness and understanding of credit distribution and aggregation based TD protocols, we present the first credit distribution and aggregation based TD protocol, given by Mattern~\cite{DBLP:journals/ipl/Mattern89}. This protocol assumes the existence of an \textit{oracle} that is responsible for initiation of the computation and termination detection.

Initially, the network has all the nodes in passive state, and the credit at each node is zero. In the beginning of a computation, the oracle, which is supposed to have credit value 1, distributes the credit value among the nodes using \textit{activation messages}. Thus, each activation message holds a credit value, $0<C<1$. Now, the oracle waits to receive credits back. Once, the cumulated credit has value one, the oracle announces termination. This protocol uses four rules, as follows:
%===================DATA STRUCTURE===================DATA STRUCTURE===================DATA STRUCTURE===================DATA STRUCTURE
%===================DATA STRUCTURE===================DATA STRUCTURE===================DATA STRUCTURE===================DATA STRUCTURE
\begin{table}
\begin{center}
\begin{spacing}{0.85}
            \begin{tabular}{ | l | p{9.2cm} | l|}
            \hline
               {\scriptsize Data structure} & {\scriptsize Description} & {\scriptsize Initial value} \\ \hline

               \multicolumn{3}{c}{{\scriptsize At all the cognitive radio nodes}} \\ \hline

               {\scriptsize $parent_i$} & {\scriptsize The parent node of $\mathit{CR}_i$. It is the first node that sent credit to $\mathit{CR}_i$ since $\mathit{CR}_i$ became active.} & {\scriptsize 0} \\ \hline

               {\scriptsize $\mathit{hold}_i$} & {\scriptsize The credit received from the parent node of $\mathit{CR}_i$.} & {\scriptsize 0} \\ \hline

               {\scriptsize $in_i[]$} & {\scriptsize $in_i[j]$ represents the received credit at $\mathit{CR}_i$ from $\mathit{CR}_j$ such that $j \neq parent_i$.} & {\scriptsize $\forall j, in_i[j] = \emptyset$}\\ \hline

               {\scriptsize $out_i[]$} & {\scriptsize $out_i[j]$ represents the credit sent from $\mathit{CR}_i$ to $\mathit{CR}_j$.} & {\scriptsize $\forall j, out_i[j] = \emptyset$}\\ \hline

               {\scriptsize $\mathit{session}_i$} & {\scriptsize The current session of the computation at $\mathit{CR}_i$.} & {\scriptsize 0} \\ \hline

               {\scriptsize $\mathit{initiator}\_id$} & {\scriptsize The initiator of the current session of the computation at $\mathit{CR}_i$.} & {\scriptsize 0} \\ \hline

               \multicolumn{3}{c}{{\scriptsize At the chief executive node, $C_E$}} \\ \hline

               {\scriptsize $PU_{\mathit{affected}}[]$} & {\scriptsize $PU_{\mathit{affected}}[j]$ represents the identity of an affected node, $\mathit{CR}_j$.} & {\scriptsize $\forall j, PU_{\mathit{affected}}[j] = \emptyset$}\\ \hline

               {\scriptsize $\mathit{C\_PU_{\mathit{affected}}[]}$} & {\scriptsize $C\_PU_{\mathit{affected}}[j]$ represents the credit at an affected node, $\mathit{CR}_j$.} & {\scriptsize $\forall j, C\_PU_{\mathit{affected}}[j] = \emptyset$}\\ \hline

            \end{tabular}
\B
\caption{Data structures used in the \textit{T-CRAN} protocol.}
\BBB\BBB
\label{table:datastructure}
\end{spacing}
\end{center}
\end{table}

\begin{description}[noitemsep]
  \item[R1] When a passive node receives an activation message with credit $0<C<1$, the node becomes active, holds the credit $C$, and executes the assigned computation.
  \item[R2] When an active node receives an activation message with credit $0<C<1$, the credit value, $C$, is transferred to the oracle.
  \item[R3] When an active node, having credit $C$, sends an activation message, the node sends only $\frac{C}{2}$ credit with the message.
  \item[R4] When a node becomes passive, it surrenders its credit to the oracle.

\end{description}
\begin{wrapfigure}{r}{8cm}
\BBB\BBB
\begin{center}
\begin{minipage}{0.45\textwidth}
          \includegraphics{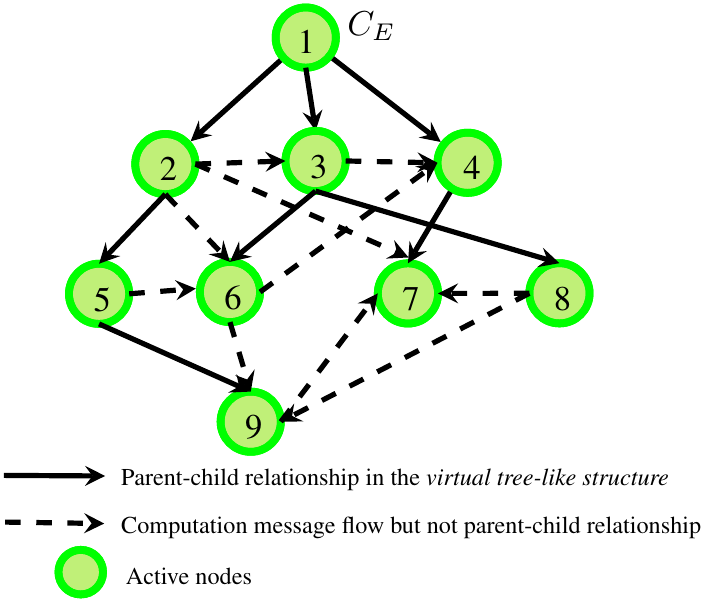}
\end{minipage}
\end{center}
\BBB\B
\caption{The \textit{virtual tree-like structure}.}
\BBB\BBB
\label{fig:The virtual tree-like structure.}
\end{wrapfigure}

In addition, this protocol always satisfies the three requirements: (\textit{i}) at any time, the sum of credits held by nodes, activation messages, and the oracle is 1, (\textit{ii}) when a node is active, it holds a credit $C> 0$, and (\textit{iii}) an activation message, which is in-transit, holds a credit $C> 0$.

The limitations of Mattern's protocol~\cite{DBLP:journals/ipl/Mattern89} is the existence of a fixed oracle to announce termination that increases the waiting time for termination announcement. Also, this protocol is assumed to work in static networks, where the nodes communicate using fixed communication links. However, unlike Mattern's protocol~\cite{DBLP:journals/ipl/Mattern89}, our protocol is designed for dynamic \textit{CRN}s that have multiple heterogeneous channels. Also, we do not assume the existence of a fixed oracle. Further details of the proposed protocol are presented in the next section.

\section{The \textit{T-CRAN} Protocol}
\label{section:The T-CRAN Protocol}

%\begin{figure}
%\centering
%          \includegraphics{virtual_tree_like_structure_aa.pdf}
%\B
%\caption{The \textit{virtual tree-like structure}.}
%\BBB
%\label{fig:The virtual tree-like structure.}
%\end{figure}

We present our credit distribution and aggregation based termination detection protocol, called \textit{T-CRAN} (see Figures~\ref{fig:abstract_model_without_PU} and~\ref{fig:abstract_model_with_PU}). The initiation of \emph{T-CRAN} protocol is marked by the distribution of a fixed credit value, $C$, and when a node receives back the same credit value, $C$, it announces termination.

\subsection{High level description of the \textit{T-CRAN} protocol}
\label{subsec:High level description of our protocol}
A node initiates a computation and the \textit{T-CRAN} protocol\footnote{Recall that the \textit{T-CRAN} protocol is modeled as another layer on top of the computation; hence, it executes concurrently with the computation.} with a \textit{fixed} credit value, $C$, and such a node is called the \textit{chief executive node}, $C_E$. $C_E$ may distribute the computation among its neighboring nodes, called the \textit{child nodes} (of $C_E$), with non-zero credit values, and $C_E$ becomes a \textit{parent node} of its \emph{child nodes}. The \emph{child nodes} can further distribute the computation like their \emph{parent node}. In this manner, the credit distribution phase creates an illusion of a logical tree among the CRs that are executing an identical computation. We call it the \textit{virtual tree-like structure}, henceforth (see Figure~\ref{fig:The virtual tree-like structure.}). Note that the sum of credits in the network (including the nodes and in-transit messages) must be equal to $C$.

When a node finishes its computation, the node's state becomes passive, and the node surrenders its credit. In the \textit{virtual tree-like structure}, a node surrenders its credit to either (\textit{i}) its parent node if the parent node is active, (\textit{ii}) any node whose state is active and that had sent credit to the node previously, or (\textit{iii}) any node that is executing the same computation, whose state is active.\footnote{Such a credit surrender process decreases the waiting time for any node, especially for parent nodes and $C_E$, if they have finished their computation earlier than their child nodes. Following that it is clear that the parent nodes are also allowed to surrender their credit to any of their child nodes if they are active or to any active neighboring node that is executing the identical computation.} As PUs appear, the neighboring nodes\footnote{A preference is given to neighboring nodes whose states are active.} of the affected nodes inform $C_E$ about the affected nodes. Once the affected nodes become non-affected nodes, they inform about their recovery to $C_E$ and their neighboring nodes, whose states are still active. However, $C_E$ waits for a reasonable amount of time\footnote{The time may be based on the size of the network, message transmission time, and criticality of the computation.} for the transition of affected nodes to non-affected nodes before the announcement of termination. Once $C_E$ receives back the credit $C$ (the same amount of credit that was distributed at the initiation of the computation) or a timeout occurs, it announces termination.

\noindent \textbf{Comparison with conventional credit distribution and aggregation protocols.} The difference between the conventional credit distribution and aggregation protocols~\cite{DBLP:journals/ipl/Mattern89,DBLP:journals/jise/HuangK91,DBLP:journals/tpds/TsengT01,DBLP:conf/icdcs/JohnsonM09} and our protocol lies in the credit surrender process when a node completes its computation.

The conventional credit distribution and aggregation protocols use a logical tree structure, where a \textit{fixed} root node announces termination; thus, it is mandatory that the root node stays in active state till the end of the computation.

Our credit distribution and aggregation based protocol, \textit{T-CRAN}, uses a logical structure, called the \textit{virtual tree-like structure}, where a \textit{non-fixed} $C_E$ may become passive on completion of its computation, and may send its credit, arbitrarily, to one of its child node that becomes a new $C_E$. The new $C_E$ is responsible for termination declaration, and thus, the existence of an identical $C_E$ till the end of the computation is not desired in the \textit{virtual tree-like structure}.

\begin{figure*}
\begin{framed}
\begin{center}
    \begin{minipage}[b]{.5\textwidth}
        \centering
          \includegraphics{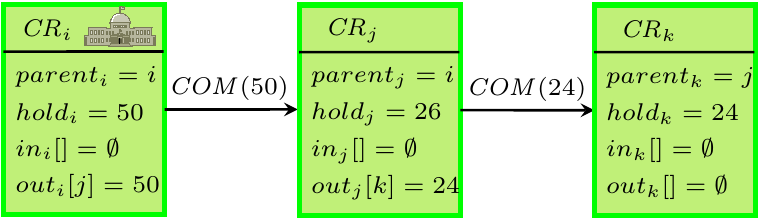}
       \subcaption{}
       \label{fig:abstract_model_a}
    \end{minipage}
    \begin{minipage}[b]{.49\textwidth}
        \centering
         \includegraphics{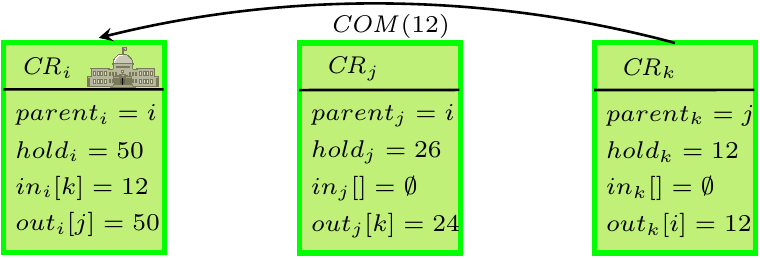}
        \subcaption{}
        \label{fig:abstract_model_b}
    \end{minipage}
     \begin{minipage}[b]{.5\textwidth}
        \centering
          \includegraphics{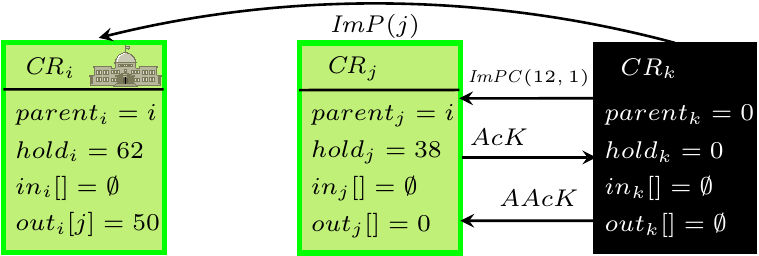}
       \subcaption{}
       \label{fig:abstract_model_c}
    \end{minipage}
    \begin{minipage}[b]{.49\textwidth}
        \centering
         \includegraphics{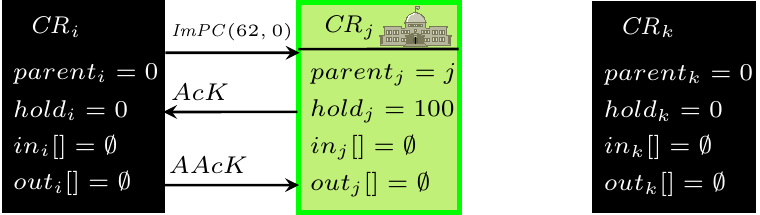}
        \subcaption{}
        \label{fig:abstract_model_d}
    \end{minipage}
     \begin{minipage}[b]{.5\textwidth}
        \centering
          \includegraphics{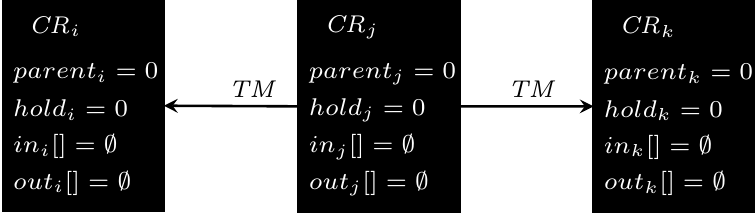}
       \subcaption{}
       \label{fig:abstract_model_e}
   \end{minipage}
     \begin{minipage}[b]{.49\textwidth}
        \centering
            \includegraphics{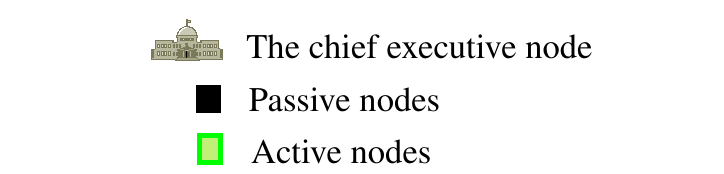}
       \subcaption{}
       \label{fig:abstract_model_f}
   \end{minipage}
\end{center}

Figures~\ref{fig:abstract_model_a} and~\ref{fig:abstract_model_b} show the \textit{credit distribution phase} (\textsc{Step}s 1-3, \textsc{Step}s 1-6 are given in Section~\ref{subsec:Details of the T-CRAN Protocol}). In Figure~\ref{fig:abstract_model_a} (\textsc{Step}s 1-2), $\mathit{CR}_i$ initiates a computation and the \textit{T-CRAN} protocol; hence, $\mathit{CR}_i$ becomes the chief executive node. $\mathit{CR}_j$ receives a \textit{COMputation message} from $\mathit{CR}_i$ with credit 50, holds a credit value ($hold_j=26$), and further distributes the computation to $\mathit{CR}_k$. In Figure~\ref{fig:abstract_model_b} (\textsc{Step} 3), $\mathit{CR}_k$ distributes the computation to $\mathit{CR}_i$, where $\mathit{CR}_i$ does not initialize $\mathit{CR}_k$ as its parent node.
~\\
Figures~\ref{fig:abstract_model_c} and~\ref{fig:abstract_model_d} show the \textit{credit aggregation phase} (\textsc{Step}s 4-5). In Figure~\ref{fig:abstract_model_c} (\textsc{Step}s 4-5), $\mathit{CR}_k$ becomes passive and surrenders its credit to its parent $\mathit{CR}_j$ (using an \textit{I am Passive with Credit message}, $\mathit{ImPC}(C,z)$). $\mathit{CR}_k$ also sends an \textit{I am Passive message}, $\mathit{ImP}(p)$, to $\mathit{CR}_i$, where the child node of $\mathit{CR}_k$ (\textit{i}.\textit{e}., $\mathit{CR}_i$) has not terminated its computation. Also, $\mathit{CR}_j$ and $\mathit{CR}_k$ do a three way handshake to ensure a lossless delivery of the credit at $\mathit{CR}_j$ (using an \textit{AcKnowledgement message}, $AcK$, and an \textit{Acknowledgement of AcK message}, $AAcK$). In Figure~\ref{fig:abstract_model_d}, the chief executive node, $\mathit{CR}_i$, surrenders its credit to $\mathit{CR}_j$, and $\mathit{CR}_i$, $\mathit{CR}_j$ do a three way handshake. Now, $\mathit{CR}_j$ becomes the new chief executive node.
~\\
Figure~\ref{fig:abstract_model_e} shows termination announcement (\textsc{Step} 6), where the chief executive node, $\mathit{CR}_j$, announces termination using \textit{Termination Message}s, $\mathit{TM}$, when $\mathit{CR}_j$ becomes passive.
\end{framed}
\BBB
\caption{The termination detection protocol, \textit{T-CRAN}, in the absence of primary users.}
\BBB
\label{fig:abstract_model_without_PU}
\end{figure*}

\subsection{Details of the \textit{T-CRAN} Protocol}
\label{subsec:Details of the T-CRAN Protocol}
Now, we first provide details of credit distribution and aggregation phases in the absence of PUs. Later in Section~\ref{subsec:Detection of Primary Users}, we will consider the presence of PUs too.

\medskip

\noindent \textbf{Credit distribution.} In our protocol, initially, all the nodes are in passive state, and a computation starts by a single message from outside world. The credit distribution phase creates a \textit{virtual tree-like structure} (see Figure~\ref{fig:The virtual tree-like structure.}) and consists of three steps (see Figures~\ref{fig:abstract_model_a} and~\ref{fig:abstract_model_b}), as follows:
\begin{description}[noitemsep,nolistsep]
  \item[\textsc{Step} 1: Initiation and distribution of a computation and the \textit{T-CRAN} protocol.] The computation and the \textit{T-CRAN} protocol is initiated by a CR node, with a fixed credit value $C$ (that is stored in variable $hold_{C_{E}}$), called the \textit{chief executive node}, $C_E$. The node $C_E$ may distribute the computation among its $q$ ($q\leq N$) neighboring CRs with non-zero credit values, say $C_1, C_2, \ldots, C_q$, using $q$ different \textit{COMputation message}s ($\mathit{COM}(C_1), COM(C_2), \ldots, COM(C_q)$). Note that once the credit distribution is over, the total credit in the network must be $C$, \textit{i}.\textit{e}., $hold_{C_{E}}+C_1+C_2+\ldots+C_q =C$. Also, we assume that the division of any credit value do not result in a floating point problem, which may result in fractional loss of credits.

  \item[\textsc{Step} 2: Reception of a \textit{COMputation message} at a passive node.] The reception of a $\mathit{COM}(C_j)$ at a passive node, say $\mathit{CR}_j$, causes $\mathit{CR}_j$ to become active and initiate the computation. In addition, $\mathit{CR}_j$ holds credit $C_j$ (in $hold_j$ variable). $\mathit{CR}_j$ may also distribute the computation among neighboring node(s) with non-zero credit values (following the procedure similar to $C_E$). Secondly, on reception of the first \textit{COMputation message} from any node, say $\mathit{CR}_x$, at a node, say $\mathit{CR}_y$, the node $\mathit{CR}_y$ designates the node $\mathit{CR}_x$ as its parent node and becomes a child of $\mathit{CR}_x$.

  \item[\textsc{Step} 3: Reception of \textit{COMputation message}s at an active node.] An active node, $\mathit{CR}_j$, may receive further \textit{COMputation message}s from the nodes other than its parent, say from $\mathit{CR}_k$. In such a situation, $\mathit{CR}_k$ does not become the parent node of $\mathit{CR}_j$. Also, the newly received credit value does not increase credit that $\mathit{CR}_j$ holds (\textit{i}.\textit{e}., $hold_j$). However, $\mathit{CR}_j$ performs the corresponding computation and keeps the received credit in an array $in_j[i]$.
\end{description}
\begin{figure*}
\begin{framed}
\begin{center}
    \begin{minipage}{.5\textwidth}
        \centering
          \includegraphics[width=80mm, height=25mm]{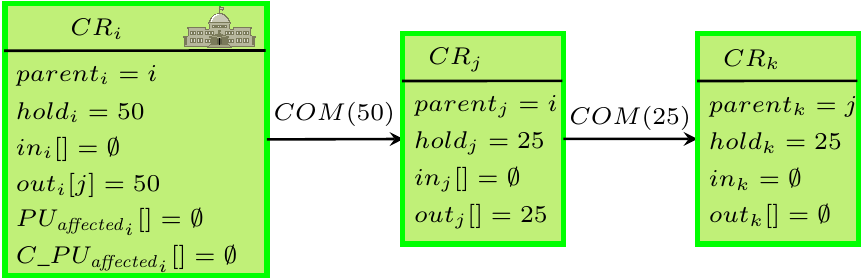}
       \subcaption{}
       \label{fig:abstract_model_PU_a}
    \end{minipage}
    \begin{minipage}{.49\textwidth}
        \centering
         \includegraphics[width=80mm, height=25mm]{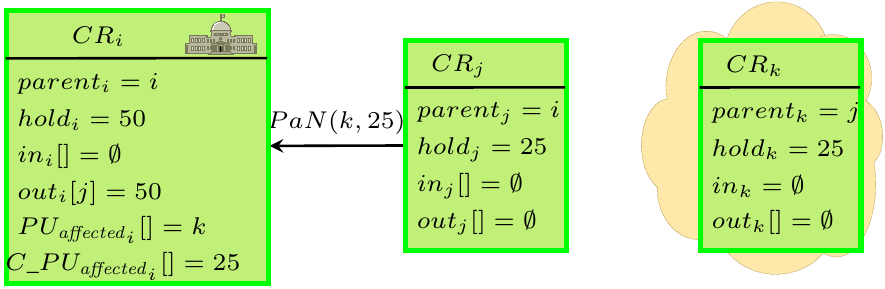}
        \subcaption{}
        \label{fig:abstract_model_PU_b}
    \end{minipage}
     \begin{minipage}[b]{.5\textwidth}
        \centering
          \includegraphics[width=80mm, height=25mm]{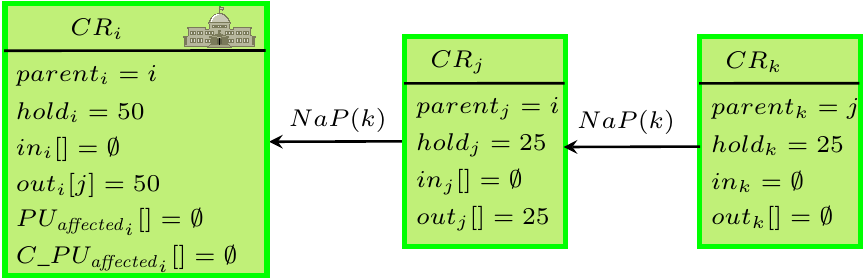}
       \subcaption{}
       \label{fig:abstract_model_PU_c}
    \end{minipage}
    \begin{minipage}[b]{.49\textwidth}
        \centering
         \includegraphics{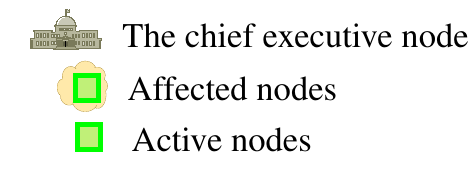}
        \subcaption{}
        \label{fig:abstract_model_PU_d}
    \end{minipage}
\end{center}
Figure~\ref{fig:abstract_model_PU_a} shows the \emph{credit distribution phase} (\textsc{Step}s 1-2), like in Figure~\ref{fig:abstract_model_a}, and $\mathit{CR}_i$ becomes the chief executive node, $C_E$.
~\\
Figure~\ref{fig:abstract_model_PU_b} shows the presence of a PU (\textsc{Step} 7, \textsc{Step}s 7-8 are given in Section~\ref{subsec:Detection of Primary Users}). $\mathit{CR}_k$ becomes an affected node that does not possess any available channel. However, $\mathit{CR}_i$ and $\mathit{CR}_j$ continues the computation because of some available channels. $\mathit{CR}_j$ sends a \textit{Primary user affected Nodes message} ($PaN(k,25)$) to $C_E$ to inform $\mathit{CR}_k$'s unavailability in the computation.
~\\
Figure~\ref{fig:abstract_model_PU_c} shows actions when a PU disappears (\textsc{Step} 8), and hence, $\mathit{CR}_k$ again becomes a non-affected node. $\mathit{CR}_k$ informs its neighbor, $\mathit{CR}_j$, and $C_E$ about recovery using \textit{Nodes released by Primary user messages}, $\mathit{NaP}$.
\end{framed}
\BBB
\caption{The termination detection protocol, \textit{T-CRAN}, in the presence of primary users.}
\BBB
\label{fig:abstract_model_with_PU}
\end{figure*}

\medskip

\noindent \textbf{Credit aggregation.} At the end of the credit distribution phase, the recipients of non-zero credits, become part of an interaction graph, $\mathcal{IG}$. When the nodes complete their computation, the credit aggregation phase is initiated. The credit aggregation phase (see Figures~\ref{fig:abstract_model_c} and~\ref{fig:abstract_model_d}) consists of two steps, as follows:
\begin{description}[noitemsep,nolistsep]
  \item[\textsc{Step} 4: Credit surrendering by active nodes.] Once $\mathit{CR}_j$ finishes its computation, it surrenders its credits to the corresponding nodes, whose states are active and had sent some credits to $\mathit{CR}_j$ previously. Unlike~\cite{DBLP:journals/ipl/Mattern89,DBLP:journals/jise/HuangK91,DBLP:journals/tpds/TsengT01,DBLP:conf/icdcs/JohnsonM09}, the \textit{T-CRAN} protocol elevates the credit surrender process at any node by allowing the node to surrender its credits, after the completion of its computation, to the corresponding sender nodes that are active (or vice versa). Moreover, $\mathit{CR}_j$ may surrender its credit to any node whose state is active and that is executing the identical computation, in case, its parent node and all the child nodes have become passive.

      The credit surrender process reduces the waiting time for any parent node (or child nodes) that wants to terminate its computation. In fact, $C_E$ can also surrender its credit to any of its child nodes, say $\mathit{CR}_x$, and $\mathit{CR}_x$ becomes the new $C_E$. In addition, $C_E$ also transfers its data structures, namely $PU_{\mathit{affected}}[]$ and $C\_PU_{\mathit{affected}}[]$ (see Table~\ref{table:datastructure}), to $\mathit{CR}_x$ at the time of credit surrender.

  \item[\textsc{Step} 5: Three-way handshake.] In the absence of failures, $\mathit{CR}_i$ acknowledges $\mathit{CR}_j$ (to inform $\mathit{CR}_j$ that $\mathit{CR}_i$ has received the credit back from $\mathit{CR}_j$) using an \textit{AcKnowledgement message} ($AcK$). $\mathit{CR}_j$ also acknowledges the reception of the $AcK$ to $\mathit{CR}_i$ using an \textit{Acknowledgement of AcK message} ($AAcK$). Such a mechanism provides a \textit{three-way handshake} and ensures safe delivery of credits. However, the non-reception of an $AcK$ at $\mathit{CR}_j$, after a timeout, causes $\mathit{CR}_j$ to surrender its credit to another node whose state is active. The three-way handshake can be avoided, if there is a guarantee of message delivery at the receiving node, which is neither an affected node nor a crashed node (using an algorithm suggested in~\cite{DBLP:conf/sss/DolevHSS12}).
%
%However, such a three-way handshake can be avoided if there is a grantee of the message delivery at the receiving node, which is neither an affected node nor a crashed node (using an algorithm suggested in~\cite{DBLP:conf/sss/DolevHSS12}).
\end{description}

\medskip

\noindent \textbf{Termination declaration.} In the beginning, any node may initiate the termination detection protocol and becomes $C_E$. However, once initiated, any node may take charge as $C_E$ (according to \textsc{Step} 4), and the new $C_E$ declares the final termination (see Figure~\ref{fig:abstract_model_e}). In fact, between initiation and termination of any computation, there could be multiple chief executive charge handovers in the network.
\begin{description}[noitemsep,nolistsep]
  \item[\textsc{Step} 6: Termination announcement.] When $C_E$ holds credit $C$ and it has completed its computation, $C_E$ informs all the other nodes in the interaction graph, $\mathcal{IG}$, about the termination of the computation using \textit{Termination Message}s ($\mathit{TM}$). However, in any case, only a single node (\textit{i}.\textit{e}., $C_E$) can announce termination (when it holds credit $C$). We also relax the termination detection criteria in Section~\ref{subsec:Termination Declaration}.
\end{description}

\subsection{Detection of primary user(s)}
\label{subsec:Detection of Primary Users}
The appearance of a PU perturbs the working of the CRs (as well as the termination detection protocol) and forces them to tune to another available channel in their $\mathit{LCS}$. Specifically, the appearance of a PU can be visualized similar to the network partitioning, as it partitions the network into two parts, as explained below:
\begin{itemize}[noitemsep]
  \item Primary user(s) affected \textit{CRN} ($CRN_P$): It is a part of the network that consists of affected nodes, which cannot send or receive any message.
  \item Non-primary user(s) affected \textit{CRN} ($CRN_N$): It consists of all the CRs that have completed their computation and surrendered their credit to the respective senders (of the credit).
\end{itemize}

\begin{figure}
\centering
          \includegraphics[height=30mm]{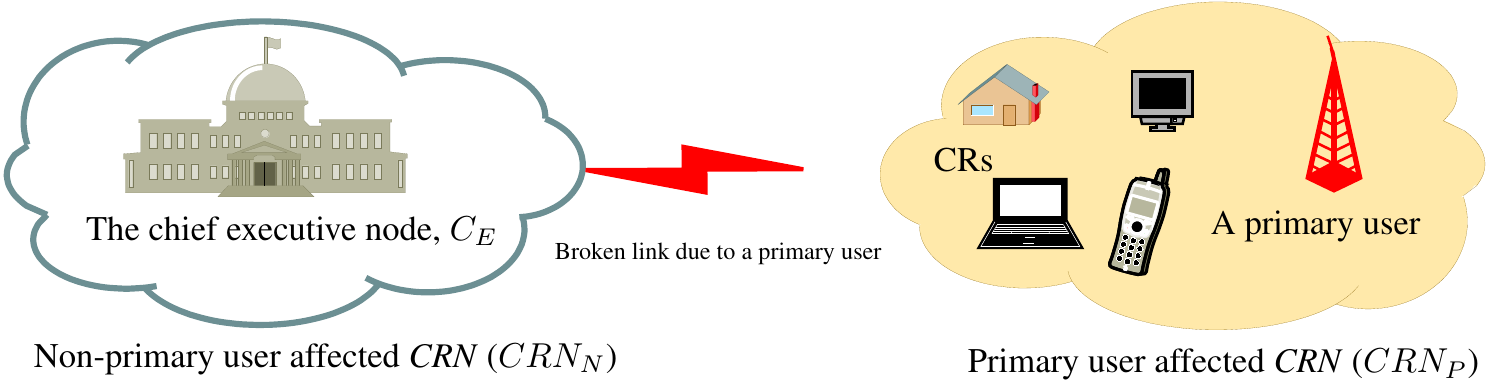}
\B
\caption{The abstract view of a primary user's appearance in cognitive radio networks.}
\BBB
\label{fig:The_abstract_view_of_a_primary_user appearance in the cognitive radio network.}
\end{figure}

In Figure~\ref{fig:The_abstract_view_of_a_primary_user appearance in the cognitive radio network.}, we show these two \textit{CRN}s, namely $CRN_P$ and $CRN_N$, where, the total credit $C$ is the sum of credits at $CRN_P$ and credit at $CRN_N$. Further, $C_E$ waits for credits of affected nodes. Intuitively, the appearance of a PU can be interpreted as follows:
\begin{itemize}[noitemsep]
  \item When a PU never ever leaves the channel and the nodes are unable to tune to another available channel, the state of the affected nodes can be interpreted as a crash (that is a permanent failure).
  \item When a PU persists in the network for a very long time, the computation at the affected nodes can be interpreted as excessively slowed down due to PUs.
\end{itemize}
However, both the above situations are indistinguishable for other nodes in the network. Thus, we develop an approach that is useful to declare termination even in the presence of primary users. Note that the available hardware approaches –-- match filter, energy filter, feature filter, inference temperature management~\cite{1399240} and spectrum sensing techniques~\cite{DBLP:journals/tmc/UrgaonkarN09,DBLP:journals/ejasp/ZengLHZ10,DBLP:journals/ejasp/WangLSH10} --– are capable enough to detect the presence of PUs (by any CR). Specifically, the appearance of a PU may turn a non-affected node to an affected node (see \textsc{Step} 7 and Figure~\ref{fig:abstract_model_PU_b}), and when the PU leaves the channel, an affected node becomes a non-affected node (see \textsc{Step} 8 and Figure~\ref{fig:abstract_model_PU_c}), as follows:
\begin{description}[noitemsep]
  \item[\textsc{Step} 7: Node failure due to the appearance of PUs.] An affected node, say $\mathit{CR}_j$, is detected by all the neighboring nodes, whose states are active (and these neighboring nodes may be the parent node or child nodes of $\mathit{CR}_j$). All the neighboring nodes of $\mathit{CR}_j$, whose states are active, inform $C_E$ about such a situation using a \textit{Primary user affected Nodes message} ($\mathit{PaN}$). Each $\mathit{PaN}$ holds the identity of $\mathit{CR}_j$, and the credit sent (received) to (from) $\mathit{CR}_j$. On reception of each $\mathit{PaN}$, $C_E$ enlists $\mathit{CR}_j$ and $\mathit{CR}_j$'s credit in the corresponding data structures (namely, $PU_{\mathit{affected}}[]$ and $C\_PU_{\mathit{affected}}[]$).

      Further, all the senders of $\mathit{PaN}$ messages remove credit information about $\mathit{CR}_j$ from their $in[]$ or $out[]$, whichever the case may be. Note that, like any CR, $C_E$ can also detect its affected neighboring nodes, and it does the same as on receiving a $\mathit{PaN}$ and removes them from $in_{C_{E}}[]$ or $out_{C_{E}}[]$. However, other $\mathit{PaN}$ messages, about the same affected nodes require the identical processing at $C_E$.

      Moreover, the reception of $\mathit{PaN}$ messages is sufficient to declare termination of the computation, after a reasonable amount of time, in case the following equation~\ref{eq:TD_when_affected_nodes} holds true:
      \begin{equation}\label{eq:TD_when_affected_nodes}
      out_{C_{E}}[]= \emptyset \wedge in_{C_{E}}[]= \emptyset \wedge (hold_{C_{E}} + C\_PU_{\mathit{affected}_{C_{E}}}[] = C)
      \end{equation}
      Such a termination detection is called \textit{weak termination} (see Section~\ref{subsec:Termination Declaration} for details, and the equation~\ref{eq:TD_when_affected_nodes} will be proved in Section~\ref{section:Safety property}).

  \item[\textsc{Step} 8: Recovery of the affected nodes.] Once an affected node, say $\mathit{CR}_j$, becomes a non-affected node, $\mathit{CR}_j$ informs $C_E$ and all its neighboring nodes, whose states are active, using \textit{Nodes released by Primary user messages} ($\mathit{NaP}$). On reception of a $\mathit{NaP}$ message at $C_E$, $C_E$ first checks whether the computation has terminated. If not, then $C_E$ informs $\mathit{CR}_j$ about the ongoing computation (with $\mathit{CR}_j$'s credit value). On the other hand, if the neighboring nodes of $\mathit{CR}_j$ have not completed the computation, they hold the credit back in their respective data structures, $in[]$ or $out[]$, whichever the case may be. In this manner, the total credit remains identical as it was at the time of computation initiation.
\end{description}

\subsection{Termination declaration}
\label{subsec:Termination Declaration}
Any kind of node failure (\textit{e}.\textit{g}., non-availability of a channel in $\mathit{LCS}$, mobility of the nodes, and crash, given in Failure Model, Section~\ref{section:the_system_model}) may lead to either temporary or permanent disconnection of the CRs from the network as well as discontinuity of the computation. However, the failures are quite common in \textit{CRN}. It is not surprising that such a disconnection may leave $C_E$ to starve to collect the necessary credits for termination declaration. Hence, in order to avoid the endless waiting at $C_E$, $C_E$ may declare either kind of termination, as defined below:
\begin{itemize}[noitemsep]
  \item \textit{Strong termination} infers passive state of all the CRs that were executing an identical computation and the absence of in-transit messages. It is also called \textit{correct and safe termination} announcement.

  \item \textit{Weak termination} refers to $CRN_P$ and $CRN_N$, where the state of all the non-affected CRs is passive, and there is no in-transit message. Also, the disappearance of PUs (or recovery from mobility) results in the strong termination. The main advantage of weak termination is the detection of affected nodes and to avoid endless waiting to announce strong termination. However, our approach announces weak termination if equation~\ref{eq:TD_when_affected_nodes} holds true.

      The significance of weak termination can be figured out by the following example: suppose, we start a leader election (LE) protocol~\cite{DBLP:conf/icdcsw/SharmaS11} in \textit{CRN} that consists of 100 nodes, initially. During the execution of the LE protocol, say 20 nodes became affected nodes. Thus, it is impractical to wait for strong termination, because a leader may also be elected out of 80 nodes. After recovery from PUs, the remaining 20 nodes may join the network. Such a scenario reduces the waiting time for the announcement of a leader, maintains computation continuity, and enhances resource utilization. However, the weak termination losses its significance, when it is unable to satisfy the safety requirements in the computation, \textit{e}.\textit{g}., mutual exclusion and consensus.
\end{itemize}

Two more criteria for termination in the network are also defined, as follows:
\begin{itemize}[noitemsep]
  \item \textit{Local termination} represents termination of the computation at a node. Further, the node has surrendered its credit to its parent, any neighbor, or any node that is executing the identical computation.

  \item \textit{Global termination} represents termination of the computation at all the nodes. In other words, the local termination at all the nodes may lead to the global termination, in case, no message is in-transit.
\end{itemize}
More specifically, two possible outcomes of termination are considerable as: global weak termination or global strong termination.

\begin{table}
\small
\begin{center}
    \begin{tabular}{| p{17cm} |}
    \hline
    \textbf{Notations:} $\mathit{SEND}_i(m, j)$: $\mathit{CR}_i$ sends a message $m$ to $\mathit{CR}_j$, $\mathit{STATE}(i)$: current state of $\mathit{CR}_i$, \textit{i}.\textit{e}., either active or passive, $t_e$: value of timeout. All the data structures have usual meanings (see Table~\ref{table:datastructure} for details of the data structures).\\ \\

    \textbf{$A_1$.} \textbf{Computation and protocol initiation.} $\mathit{CR}_i$ receives a message $M$ from outside world $\rightarrow$ \\
    \hspace{1em} $parent_i:= i, hold_i:= C, in_i[]:= \emptyset, out_i[]:= \emptyset, session_i:= x$ \\ \\

    \textbf{$A_2$.} \textbf{Distribution of credits.} $\mathit{CR}_i$ sends \textit{COMputation message}s to its $q$ ($q\leq N$) neighboring nodes$\rightarrow $ \\
    \hspace{1em} $parent_i:= i, hold_i:= C_i, in_i[]:= \emptyset, out_i[1,2,\ldots,q]:= \{C_1,C_2,\ldots,C_q\}, session_i:= x,$\\
    \hspace{1em} for all $q$ neighboring nodes $\mathit{SEND}_i(COM(C_p), p)$\\ \\

    \textbf{$A_3$.} \textbf{Reception of \textit{COMputation message}s.} $\mathit{CR}_j$ receives a \textit{COMputation message} ($\mathit{COM}(C_j)$) from $\mathit{CR}_i \rightarrow$ \\
    \hspace{1em} $\ [\!] parent_j= \emptyset \wedge \mathit{STATE}(j)=\mathit{PASSIVE} \rightarrow parent_j:= i, hold_j:= C_j, in_j[]:= \emptyset, out_j[]:= \emptyset, session_j:= x$\\
    \hspace{1em} $\ [\!] parent_j= \mathit{CR}_a \wedge \mathit{STATE}(j)=\mathit{ACTIVE} \wedge hold_j:= C_a \rightarrow parent_j:= \mathit{CR}_a, hold_j:= C_a, in_j[i]:= C_j$, \\
    \hspace{2em} $out_j[]:= \emptyset, session_j:= x$\\\\

    \textbf{$A_4$.} \textbf{Credit surrender.} $\mathit{CR}_j$  becomes idle $\rightarrow$ \\
    \hspace{1em} $\forall k: k \in in_j[] \wedge \mathit{STATE}(k)= \mathit{ACTIVE} \rightarrow \mathit{SEND}_i(ImPC(in_i[k], 0), k)$, \textit{Three-wayHandshake}$()$,\\

    \hspace{1em} $\ [\!] parent_j = j \rightarrow$ {\scriptsize (//A case to show when $\mathit{CR}_j$ is the chief executive node)}\\
    \hspace{2em} $\mathit{SEND}_j(ImP(z), y)$, {\scriptsize (//where $y$ are the total nodes in $out_j[]\setminus z$, whose states are active, and $z$ is the new $C_E$ that also belongs to $out_j[]$)} \\
    \hspace{2em} $\mathit{SEND}_j(ImPC(hold_j, b), z)$, {\scriptsize (//$b$ represents the total child nodes of $\mathit{CR}_j$, whose states are active, except $z$)}\\
    \hspace{2em} \textit{Three-wayHandshake}$()$,\\\\

    \hspace{1em} $\ [\!] parent_j \neq j \wedge \mathit{STATE}(parent_j)=\mathit{ACTIVE} \rightarrow$ {\scriptsize (//A case to show when $\mathit{CR}_j$ is not the chief executive node)}\\
    \hspace{2em} $\mathit{SEND}_j(ImPC(hold_j,k^{\prime}),parent_j))$, \textit{Three-wayHandshake}$()$,\\
    \hspace{2em} $\forall k^{\prime}: k^{\prime} \in out_j[] \wedge \mathit{STATE}(k^{\prime})= \mathit{ACTIVE} \rightarrow $ {\scriptsize ($k^{\prime}$ represents child nodes of $\mathit{CR}_j$ whose states are active)}\\
    \hspace{3em} $\mathit{SEND}_j(ImP(parent_j), k^{\prime})$, \\\\

    \hspace{1em} $\ [\!] parent_j \neq j \wedge \mathit{STATE}(parent_j)=\mathit{PASSIVE} \wedge (\exists ! k^{\prime}: k^{\prime} \in out_j[]\vee in_j[]) \wedge \mathit{STATE}(k^{\prime})= \mathit{ACTIVE} \rightarrow $ \\
    \hspace{2em}{\scriptsize (//A case to show when $\mathit{CR}_j$ is not $C_E$, $parent_j$ is passive, and there is at least one child node of $\mathit{CR}_j$)}\\
    \hspace{2em} $\mathit{SEND}_j(ImPC(hold_j,k^{\prime\prime}),k^{\prime})$, \textit{Three-wayHandshake}$()$, $\mathit{SEND}_j(ImP(k^{\prime}),k^{\prime\prime})$, {\scriptsize (// $k^{\prime\prime}\in out_j[]\setminus k^{\prime}$)}\\\\

    \hspace{1em} $\ [\!] parent_j \neq j \wedge \mathit{STATE}(parent_j)=\mathit{PASSIVE} \wedge (\nexists ! k^{\prime}: k^{\prime} \in out_j[]\vee in_j[]) \wedge \mathit{STATE}(k^{\prime})= \mathit{ACTIVE} \rightarrow $ \\
    \hspace{2em}{\scriptsize (//A case to show when $\mathit{CR}_j$ is not $C_E$, $parent_j$ is passive, and there is no child node of $\mathit{CR}_j$)}\\
    \hspace{2em} $\mathit{SEND}_j(ImPC(hold_j,0),z^{\prime})$, \textit{Three-wayHandshake}$()$, {\scriptsize (//where $z^{\prime}$ is a node that is executing the same computation as $\mathit{CR}_j$ did)}\\\\

    \hspace{1em} $parent_j = 0$, $hold_j:= 0, inj[i]:= \emptyset, outj[]:= \emptyset, session_j:= x$\\ \\

    \textbf{$A_5$.} \textbf{Reception of $\mathit{ImPC}(C,b)$ messages.} $\mathit{CR}_j$ receives $\mathit{ImPC}(C, b)$ from $\mathit{CR}_i \rightarrow$ \\
    \hspace{1em} $\ [\!] parent_i = j \wedge b= 0 \rightarrow hold_j:= hold_j + C+in_j[i], out_j[i]:= \emptyset, in_j[i]:= \emptyset$\\
    \hspace{1em} $\ [\!] parent_i = j \wedge b\neq 0 \rightarrow hold_j:= hold_j + C+in_j[i], out_j[i]:= \emptyset, out_j[] = out_j[] \cup out_i[],in_j[i]:= \emptyset$\\
    \hspace{1em} $\ [\!] parent_j= i \rightarrow parent_j:=j, hold_j:= hold_j + C+in_j[i], out_j[] = out_j[] \cup out_i[],in_j[i]:= \emptyset$\\
    \hspace{1em} $\ [\!] parent_j\neq i \wedge parent_i\neq j \wedge b=0 \rightarrow hold_j:= hold_j + C $ \\

    \hspace{1em} $\mathit{SEND}_j(AcK, i)$,\\
    \hspace{1em} Wait for a $t_e$ or an $AAcK$ from $\mathit{CR}_i$,\\
    \hspace{1em} \textbf{if} $ t_e  \wedge \neg AcK$ \textbf{then}\\
    \hspace{2em} send a special message to $C_E$ {\scriptsize (This special message avoids multiple credit surrender by $\mathit{CR}_i$ to different nodes.)}\\ \\

    \textbf{$A_6$.} \textbf{Reception of $\mathit{ImP}(p)$ messages.} $\mathit{CR}_j$ receives $\mathit{ImP}(p)$ from $\mathit{CR}_i \rightarrow$\\
    \hspace{1em} $\ [\!] parent_j= i \rightarrow parent_j:= p$\\
    \hspace{1em} $\ [\!] parent_j \neq i \rightarrow hold_j:= hold_j + in_j[i], in_j[i]:= \emptyset $\\\\

    {\bf Function \textit{Three-wayHandshake}}$() \{$  \\
    \hspace{1em} Wait for a $t_e$ or $AcK$,\\
    \hspace{1em} \textbf{if} $t_e \wedge \neg AcK$  \textbf{then} $\mathit{SEND}_j(ImPC(hold_j, b), z^{\prime})$, {\scriptsize (//where $z^{\prime}$ is the new $C_E$)}\\
    \hspace{1em} \textbf{else} $\mathit{SEND}_j(AAcK, z) \}$ \\\\

    \hline
    \end{tabular}
\BB
\caption{The credit distribution and aggregation phases.}
\BBB
\label{tab:Actions of credit diffusion-aggregation}
\end{center}
\end{table}

\section{The \textit{T-CRAN} Protocol as Guarded Actions}
\label{section:T-CRAN in Guard-Action Paradigm}
We specify our \textit{T-CRAN} protocol as guarded actions. A guarded action is written as: $\langle Guard\rangle \rightarrow \langle Action(s) \rangle$. A guard (or predicate) of actions (or rules) is a Boolean expression, and if a guard is true, then all the actions, corresponding to that guard, are executed in an atomic manner. At some point of time more than one guard may be true. The \emph{T-CRAN} protocol as guarded actions is given in Tables~\ref{tab:Actions of credit diffusion-aggregation},~\ref{tab:Actions of disconnection and rejoining of the affected nodes}, and~\ref{tab:Actions of termination announcement}. For the sake of simplicity, we assume that all the guards are related to an identical session of a computation.

\subsection{The credit distribution and aggregation phases}
\label{subsec:Actions of credit diffusion-aggregation}
The following actions $A_1$ through $A_6$ define the credit distribution and aggregation phases, refer to Table~\ref{tab:Actions of credit diffusion-aggregation}.
\begin{description}[noitemsep]
  \item[$A_1$] \textbf{Computation and protocol initiation.} $A_1$ is executed on reception of a message $M$ from outside world, and then, a node initiates a computation and the \textit{T-CRAN} protocol.

  \item[$A_2$] \textbf{Distribution of credits.} In $A_2$, $\mathit{CR}_i$ distributes the computation among its $q$ ($q\leq N$) neighboring nodes using $q$ different \textit{COMputation message}s. Note that the \textit{T-CRAN} protocol executes concurrently over the computation.

  \item[$A_3$] \textbf{Reception of \textit{COMputation message}s.} The first guard shows that $\mathit{CR}_j$, whose state is passive, receives a \textit{COMputation message} for the first time from $\mathit{CR}_i$. Hence, $\mathit{CR}_i$ becomes the parent node of $\mathit{CR}_j$, and $\mathit{CR}_j$ keeps credit, $C_j$, in variable $hold_j$. The second guard shows that $\mathit{CR}_j$, whose state is active, receives a \textit{COMputation message} from $\mathit{CR}_i$. Hence, $\mathit{CR}_j$ does not assign $\mathit{CR}_i$ as its parent, and $\mathit{CR}_j$ keeps credit, $C_j$, in $in_j[i]$.

  \item[$A_4$] \textbf{Credit surrender.} $A_4$ is executed when a node finishes its computation and consequently becomes passive. The first statement shows that $\mathit{CR}_j$ sends \textit{I am Passive with Credit message}s to all the $k$ nodes whose states are active and they had sent some credits to $\mathit{CR}_j$.

      The first guard becomes true when $C_E$ becomes passive. $C_E$ sends \textit{I am Passive message}s ($\mathit{ImP}(p)$) to all its child nodes with the information of the new $C_E$. In addition, the old $C_E$ sends an \textit{I am Passive with Credit message} to the new $C_E$ and executes the function $\textit{Three-wayHandshake()}$ to ensure delivery of its credit ($hold_{C_{E}}$) to the new $C_E$.

      The second guard becomes true when $\mathit{CR}_j$ becomes passive and $\mathit{CR}_j$ is not the chief executive node. $\mathit{CR}_j$ informs all its child nodes about the new parent node using $Imp(p)$ messages. Also, $\mathit{CR}_j$ sends an \textit{I am Passive with Credit message} to its parent node with its credit ($hold_j$) and executes the function $\textit{Three-wayHandshake()}$ to ensure delivery of its credit.

      The third guard becomes true when $\mathit{CR}_j$ becomes passive and its parent node is also passive. However, there is at least a node $\mathit{CR}_{k^{\prime}}$ in $out_j[]$ or $in_j[]$ whose state is active. $\mathit{CR}_j$ sends its credit to $\mathit{CR}_{k^{\prime}}$ using an \textit{I am Passive with Credit message} and executes the function $\textit{Three-wayHandshake()}$. Also, $\mathit{CR}_j$ sends $\mathit{ImP}(p)$ messages to all the nodes whose states are active and in $out_j[]$.

      The forth guard becomes true when $\mathit{CR}_j$ becomes passive, its parent node is also passive, and there is no node in $out_j[]$ or $in_j[]$ whose state is active. $\mathit{CR}_j$ sends its credit to $\mathit{CR}_{z^{\prime}}$ using an \textit{I am Passive with Credit message} and executes the function $\textit{Three-wayHandshake()}$. When $\mathit{CR}_{z^{\prime}}$ is selected, a priority is given to a node that is executing the same computation as $\mathit{CR}_j$.

  \item[$A_5$] \textbf{Reception of $\mathit{ImPC}(C,b)$ messages.} The reception of \textit{I am Passive with Credit message}s at $\mathit{CR}_j$ is shown in $A_5$. The first guard becomes true when $\mathit{CR}_j$ receives $\mathit{ImPC}(C,b)$ messages from its child nodes that do not have any child node. The second guard becomes true when $\mathit{CR}_j$ receives $\mathit{ImPC}(C,b)$ messages from its child nodes that have some child nodes. The third guard becomes true when $\mathit{CR}_j$ receives an $\mathit{ImPC}(C,b)$ from its parent node. The forth guard becomes true when $\mathit{CR}_j$ receives $\mathit{ImPC}(C,b)$ messages from any node that is executing the same computation as $\mathit{CR}_j$ is executing. In all the cases, $\mathit{CR}_j$ also sends an $AcK$ to $\mathit{CR}_i$, and if $\mathit{CR}_j$ does not receive an $AAcK$ from $\mathit{CR}_i$, it sends a special message to $C_E$ that is used to balance the credit in the system. (The working of this special message is shown in Figure~\ref{fig:credit_does_not_exceed2}).

  \item[$A_6$] \textbf{Reception of $\mathit{ImP}(p)$ messages.} The reception of \textit{I am Passive message}s at $\mathit{CR}_j$ is given in $A_6$. The first guard is related to the arrival of an $\mathit{ImP}(p)$ at a child node $\mathit{CR}_j$ from its parent node, and the second guard is related to the arrival of an $\mathit{ImP}(p)$ from any node other than its parent node.
\end{description}

\subsection{The appearance and disappearance of primary users}
\label{subsec:Actions of disconnection and rejoining of affected nodes}
The following actions $B_1$ through $B_4$ are related to the appearance and disappearance of primary users on channels, refer to Table~\ref{tab:Actions of disconnection and rejoining of the affected nodes}.
\begin{description}[noitemsep]
  \item[$B_1$] \textbf{Appearance of a PU.} $B_1$ becomes true when a PU appears on a channel. The first guard becomes true when $\mathit{CR}_i$ has some channels in its $\mathit{LCS}$ and tunes to one of them. The second guard becomes true when $\mathit{CR}_i$'s $\mathit{LCS}$ is empty; hence, $\mathit{CR}_i$ becomes an affected node. In addition, all the $k$ neighbors, whose states are active, of the affected node $\mathit{CR}_i$ send $\mathit{PaN}$ messages to $C_E$.

  \item[$B_2$] \textbf{Reception of $\mathit{PaN}$ messages.} When $C_E$ receives $\mathit{PaN}$ messages, it places the affected nodes and their credits, which are received with $\mathit{PaN}$ messages, in the respective data structures.

  \item[$B_3$] \textbf{Disappearance of a PU.} When a PU leaves the channel, all the affected nodes become non-affected nodes, and they send $\mathit{NaP}$ messages to $C_E$ and all its neighboring nodes, whose states are active.

  \item[$B_4$] \textbf{Reception of $\mathit{NaP}$ messages from $\mathit{CR}_j$.} The reception of a $\mathit{NaP}$ at $C_E$, from any node, notifies the absence of PU(s). The first guard becomes true when $C_E$ receives $\mathit{NaP}$ messages for the current computation. $C_E$ removes the sender of the $\mathit{NaP}$ (that is $\mathit{CR}_j$) from its $PU_{\mathit{affected}}[]$ and informs $\mathit{CR}_j$ about the ongoing computation. The second guard becomes true when the neighboring nodes of $\mathit{CR}_i$ receive $\mathit{NaP}$ messages. All the receiver nodes send a special message $m^{\prime}$ to $C_E$ that is a request to receive back the credit that they had sent earlier when $\mathit{CR}_i$ was an affected node.
\end{description}

\begin{table}
\begin{center}
\small
    \begin{tabular}{| p{17cm} |}
    \hline
    \textbf{Notations:} $\mathit{SEND}_i(m, j)$: $\mathit{CR}_i$ sends a message $m$ to $\mathit{CR}_j$, $\mathit{STATE}(i)$: current state of $\mathit{CR}_i$, \textit{i}.\textit{e}., either active or passive, $\mathit{Neighbor}_i[]$: neighboring nodes of $\mathit{CR}_i$ that are executing the same computation as $\mathit{CR}_i$. All the data structures have usual meanings (see Table~\ref{table:datastructure} for details of the data structures).         \\ \\

    \textbf{$B_1$.} \textbf{Appearance of a PU.} A primary user appears on channels $\rightarrow$\\
    \hspace{1em} $\ [\!] LCS_i\neq \emptyset \rightarrow \mathit{CR}_i$ leaves the channel and tune to another available channel\\
    \hspace{1em} $\ [\!] LCS_i= \emptyset \rightarrow \mathit{CR}_i$ becomes an affected nodes and is not allowed to send and receive messages,\\
    \hspace{2em} $\forall k\in \mathit{Neighbor}_i[] \wedge \mathit{STATE}(k) = \mathit{ACTIVE} \rightarrow \mathit{SEND}_k(PaN, C_E)$, where a $\mathit{PaN}$ holds $\langle \mathit{CR}_i, in_k[i], out_k[i]\rangle$,\\
    \hspace{2em} $in_k[i]=\emptyset, out_k[i]=\emptyset$\\\\

    \textbf{$B_2$.} \textbf{Reception of $\mathit{PaN}$ messages.} $C_E$ receives a $\mathit{PaN}$ that holds $\langle \mathit{CR}_i, in_k[i], out_k[i]\rangle \rightarrow$ \\
    \hspace{1em} $PU_{\mathit{affected}_{C_{E}}}[i]:= \mathit{CR}_i, C\_PU_{\mathit{affected}_{C_{E}}}[i]:= \langle in_k[i], out_k[i]\rangle$ \\\\

    \textbf{$B_3$.} \textbf{Disappearance of a PU.} A primary users disappear from channels $\rightarrow$\\
    \hspace{1em} $ LCS_i\neq \emptyset$, $\mathit{SEND}_i(NaP, C_E)$, $\forall k\in \mathit{Neighbor}_i[] \wedge \mathit{STATE}(k) = \mathit{ACTIVE} \rightarrow \mathit{SEND}_i(NaP, k)$ \\\\

    \textbf{$B_4$.} \textbf{Reception of $\mathit{NaP}$ messages from $\mathit{CR}_j$.} \\
    \hspace{1em} $\ [\!] i = C_E \wedge session_{C_{E}} = x \wedge \mathit{STATE}(C_E)=\mathit{ACTIVE} \rightarrow$\\
    \hspace{2em} $PU_{\mathit{affected}_{C_{E}}}[j]:= \emptyset, C\_PU_{\mathit{affected}_{C_{E}}}[j]:= \emptyset, \mathit{SEND}_{C_{E}}(m, \mathit{CR}_j)$\\

    \hspace{1em} $\ [\!] i \neq C_E \wedge j \in \mathit{Neighbor}_i[] \wedge session_i = x \wedge \mathit{STATE}(i)=\mathit{ACTIVE} \rightarrow \mathit{SEND}_i(m^{\prime}, C_E)$\\

    \hline
    \end{tabular}
\BB
\caption{The appearance and disappearance of primary users.}
\BBB
\label{tab:Actions of disconnection and rejoining of the affected nodes}
\end{center}
\end{table}

\begin{table}
\small
\begin{center}
    \begin{tabular}{| p{17cm} |}
    \hline
    \textbf{$C_1$.} \textbf{Global weak termination.} \\
    \hspace{1em} $out_{C_{E}}[]= \emptyset \wedge in_{C_{E}}[]= \emptyset \wedge (hold_{C_{E}} + C\_PU_{\mathit{affected}_{C_{E}}}[] = C) \wedge \mathit{STATE}(C_E) = \mathit{PASSIVE}\rightarrow$ \\
    \hspace{2em} Announce global weak termination\\ \\

   \textbf{$C_2$.} \textbf{Global strong termination.} \\
   \hspace{1em} $out_{C_{E}}[]= \emptyset \wedge in_{C_{E}}[]= \emptyset \wedge C\_PU_{\mathit{affected}_{C_{E}}}[] = \emptyset \wedge hold_{C_{E}} = C \wedge \mathit{STATE}(C_E) = \mathit{PASSIVE}\rightarrow$ \\
   \hspace{2em} Announce global strong termination\\
    \hline
    \end{tabular}
\BB
\caption{The termination announcement.}
\BBB\BBB
\label{tab:Actions of termination announcement}
\end{center}
\end{table}

\subsection{The termination announcement}
\label{subsec:Actions of termination announcement}
The actions $C_1$ and $C_2$ represent termination detection in the networks, refer to Table~\ref{tab:Actions of termination announcement}.
\begin{description}[noitemsep]
  \item[$C_1$] \textbf{Global weak termination.} $C_1$ announces the global weak termination when only a single node (that is $C_E$) has the credit $C$, which was distributed at the time of computation initiation, in its $hold_{C_{E}}$ and $C\_PU_{\mathit{affected}}[]$.

  \item[$C_2$] \textbf{Global strong termination.} $C_2$ announces the global strong termination when $C_E$ contains the total credit $C$, which was distributed at the time of computation initiation, in its $hold_{C_{E}}$, no credit in $C\_PU_{\mathit{affected}}[]$, and $C_E$ has become passive.
\end{description}

\begin{wraptable}{r}{6cm}
\BBB%\begin{table}
\begin{center}
    \begin{tabular}{|l|l|}
    \hline
    {\scriptsize Nodes} & {\scriptsize Local Channel Sets ($\mathit{LCS}$)} \\ \hline
    {\scriptsize 1}     & {\scriptsize $2,3,\textbf{5}$}                  \\ \hline
    {\scriptsize 2}     & {\scriptsize $3,\textbf{5},6,9$}               \\ \hline
    {\scriptsize 3}     & {\scriptsize $\textbf{5}$}                        \\ \hline
    {\scriptsize 4}     & {\scriptsize $\textbf{5}$}                        \\ \hline
    {\scriptsize 5}     & {\scriptsize $\textbf{5},7,9$}                  \\ \hline
    {\scriptsize 6}     & {\scriptsize $\textbf{5},9$}                      \\ \hline
    \end{tabular}
\caption{Local channel sets of all the nodes.}
\label{tab:LCS_of_nodes}
\end{center}
%\end{table}
\BBB
\end{wraptable}

\section{The Working of the \textit{T-CRAN} protocol}
\label{section:The Working of the T-CRAN}

In Figures~\ref{fig:exe_withotu_PU} and~\ref{fig:exe_with_PU}, a sample execution of the \textit{T-CRAN} protocol is represented in the absence and presence of a primary user, respectively. The local channel set for every node is given in Table~\ref{tab:LCS_of_nodes}, where boldface characters show the currently tuned channel at the respective nodes. Actions $A_1$ to $A_6$, actions $B_1$ to $B_4$, and actions $C_1$, $C_2$ are given in Tables~\ref{tab:Actions of credit diffusion-aggregation},~\ref{tab:Actions of disconnection and rejoining of the affected nodes}, and~\ref{tab:Actions of termination announcement}. Also, all the nodes are in the transmission range of each other.

In Figure~\ref{fig:exe1}, cognitive radio node 1 initiates an assigned computation and the \textit{T-CRAN} protocol, hence, behaves as the chief executive node, $C_E$. Node 1 sends a \textit{COMputation message} ($\mathit{COM}(0.9)$) to node 2, and node 1 holds a credit value 0.1. Further, nodes 2, 3, 4, and 5 also distribute the computation with some credit values and initialize their respective data structures ($A_1$, $A_2$, and first guard of $A_3$). Note that, $in_i[]$ array of all the nodes is empty initially, and the sum of credits at nodes 1-6 equals 1.

In Figure~\ref{fig:exe2}, node 5 sends a \textit{COMputation message} ($\mathit{COM}(0.1)$) to node 3. The reception of $\mathit{COM}(0.1)$ initializes $in_3[5]$ array variable at node 3 and corresponding entry in $out_5[3]$ (second guard of $A_3$). Also, node 6 sends two \textit{COMputation message}s to nodes 4 and 3.
\begin{figure}
\centering
%1+2
    \begin{minipage}[b]{.5\textwidth}
        \centering
          \includegraphics[width=60mm, height=35mm]{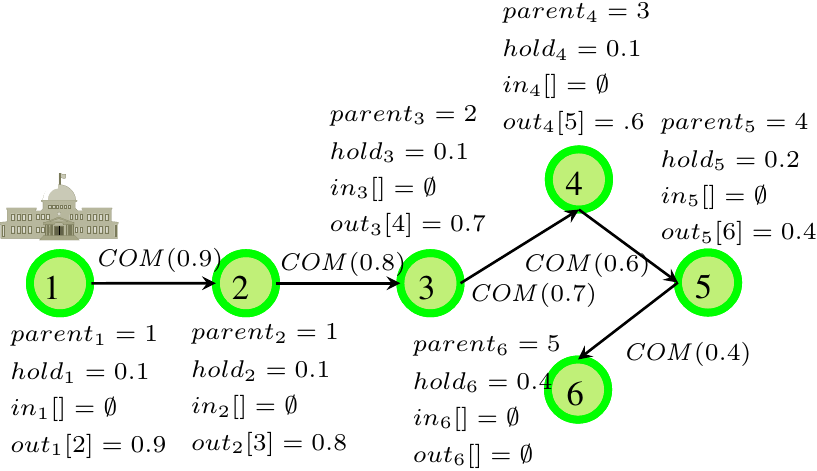}
       \subcaption{}
       \label{fig:exe1}
    \end{minipage}
    \begin{minipage}[b]{.49\textwidth}
        \centering
         \includegraphics[width=60mm, height=35mm]{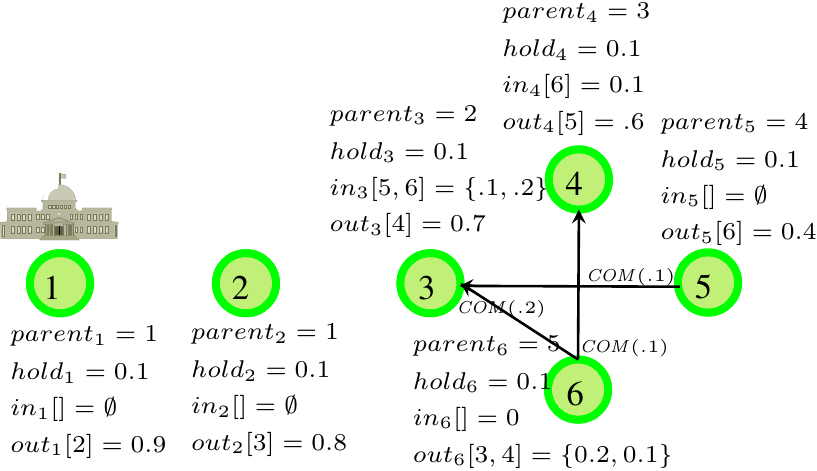}
        \subcaption{}
        \label{fig:exe2}
    \end{minipage}
%3,4
    \begin{minipage}[b]{.5\textwidth}
        \centering
          \includegraphics[width=60mm, height=35mm]{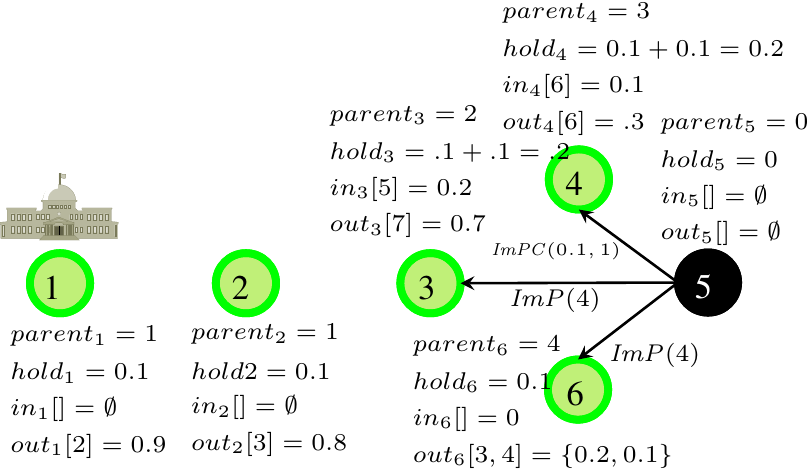}
       \subcaption{}
       \label{fig:exe3}
    \end{minipage}
    \begin{minipage}[b]{.49\textwidth}
        \centering
         \includegraphics[width=60mm, height=35mm]{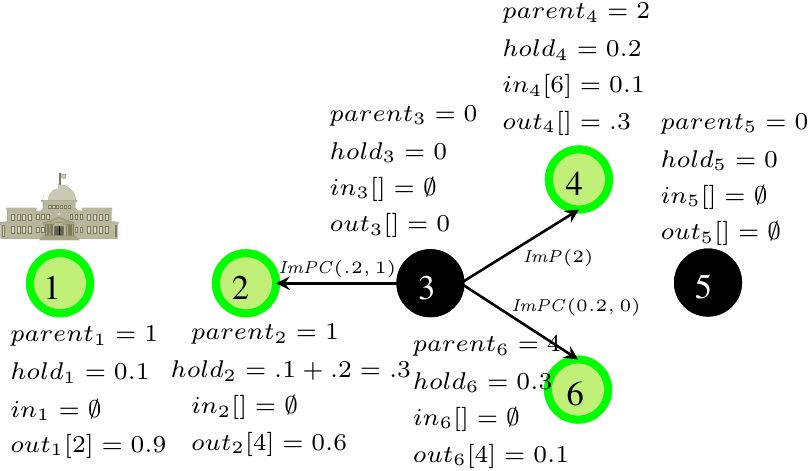}
        \subcaption{}
        \label{fig:exe4}
    \end{minipage}
%5,6
    \begin{minipage}[b]{.5\textwidth}
        \centering
          \includegraphics[width=60mm, height=35mm]{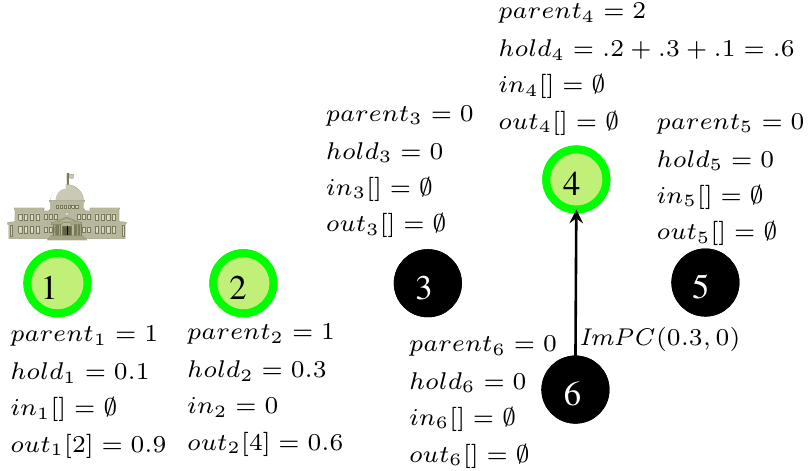}
       \subcaption{}
       \label{fig:exe5}
    \end{minipage}
    \begin{minipage}[b]{.49\textwidth}
        \centering
         \includegraphics[width=60mm, height=35mm]{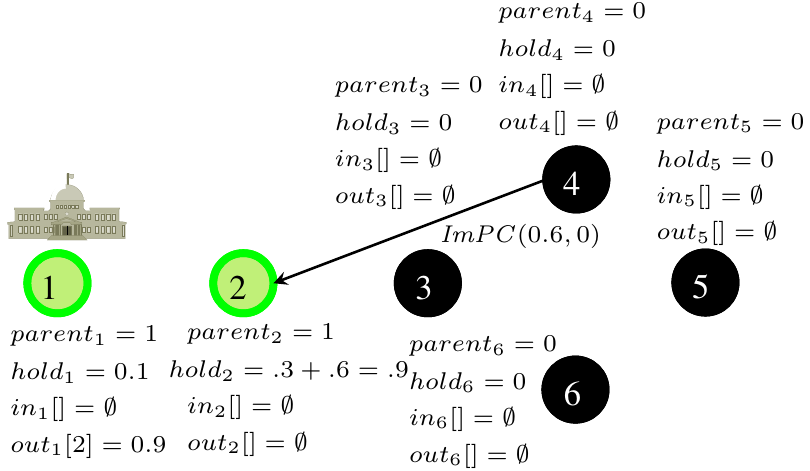}
        \subcaption{}
        \label{fig:exe6}
    \end{minipage}
%7
    \begin{minipage}[b]{.5\textwidth}
        \centering
          \includegraphics[width=60mm, height=35mm]{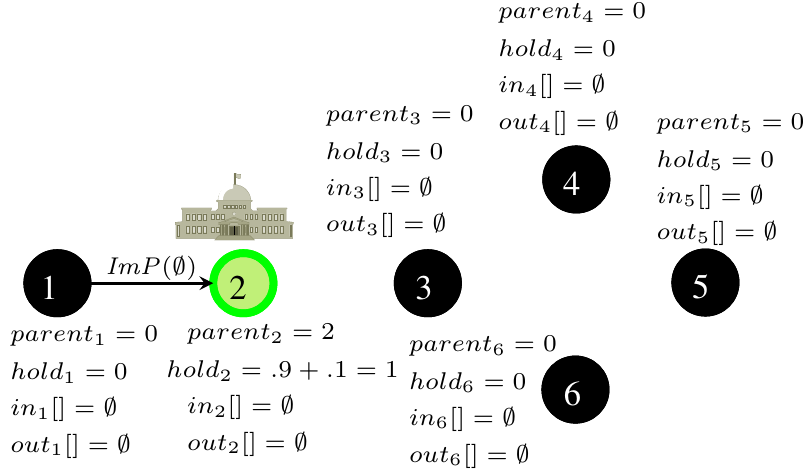}
       \subcaption{}
       \label{fig:exe7}
    \end{minipage}
    \begin{minipage}[b]{.49\textwidth}
        \centering
         \includegraphics[width=45mm, height=25mm]{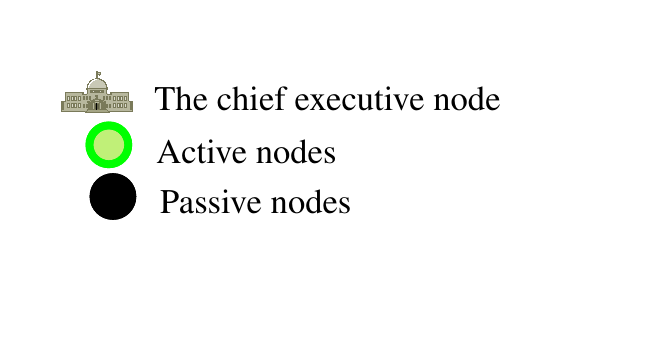}
        \subcaption{Notations}
        \label{fig:exe_notations_without_PU}
    \end{minipage}

\B
\caption{The Working of the \textit{T-CRAN} protocol in the absence of PUs.}
\BBB
\label{fig:exe_withotu_PU}
\end{figure}

In Figure~\ref{fig:exe3}, node 5 becomes passive and surrenders its credit (contained in variable $hold_5$) to its parent node 4 via an \textit{I am Passive with Credit message}, $\mathit{ImPC}(0.1,1)$ (second guard of $A_4$). In addition, node 5 sends two \textit{I am Passive message}s ($\mathit{ImP}(4)$) to its child nodes 3 and 6. These \textit{I am Passive message}s hold information of the new parent node 4. The reception of $\mathit{ImPC}(0.1,1)$ at node 4 triggers second guard of $A_5$. The reception of $\mathit{ImP}(4)$ triggers first guard of $A_6$ at node 6 so that node 6 assigns node 4 as its parent node, and second guard of $A_6$ at node 3 so that node 3 holds 0.2 credits. Note that we are not showing the three way handshake for clarity of figures, interested readers can look ahead to Figure~\ref{fig:credit_does_not_exceed2} to see the working of the three-way handshake.

In Figure~\ref{fig:exe4}, node 3 becomes passive and sends: (\textit{i}) an $\mathit{ImPC}(0.2,0)$ to node 6 (the first line of $A_4$), and node 6 holds back 0.3 credits (first guard of $A_5$), (\textit{ii}) an $\mathit{ImPC}(0.1,1)$ to its parent node 2 (second guard of $A_4$), and node 2 holds 0.3 credits now (second guard of $A_5$), (\textit{iii}) an $\mathit{ImP}(2)$ to node 4 (second guard of $A_4$), and node 4 assigns node 2 as its parent node (first guard of $A_5$).

In Figure~\ref{fig:exe5}, node 6 becomes passive and surrenders its credit to node 4 (second guard of $A_4$). Node 4 holds 0.6 credits now (first guard of $A_5$). In Figure~\ref{fig:exe6}, node 4 becomes passive and surrenders its credit to node 2 via an $\mathit{ImPC}(0.6,0)$ (second guard of $A_4$).

In Figure~\ref{fig:exe7}, the chief executive node 1 becomes passive and surrenders its credit to node 2 (first guard of $A_4$), and node 2 now becomes the new chief executive node, holds credit 1 (third guard of $A_5$). Once the node 2 finishes its computation, node 2 announces the global strong termination (according to $C_2$).

%-------------PU-------------PU-------------PU-------------PU-------------PU-------------PU-------------PU-------------PU-------------PU
%-------------PU-------------PU-------------PU-------------PU-------------PU-------------PU-------------PU-------------PU-------------PU
\begin{figure}
\centering
%88,9
    \begin{minipage}[b]{.5\textwidth}
        \centering
          \includegraphics[width=60mm, height=35mm]{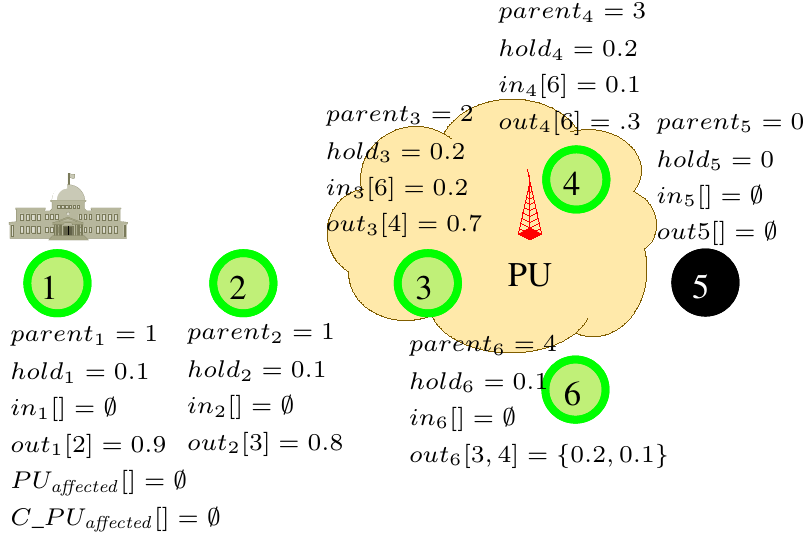}
       \subcaption{}
       \label{fig:exe8}
    \end{minipage}
    \begin{minipage}[b]{.49\textwidth}
        \centering
         \includegraphics[width=60mm, height=35mm]{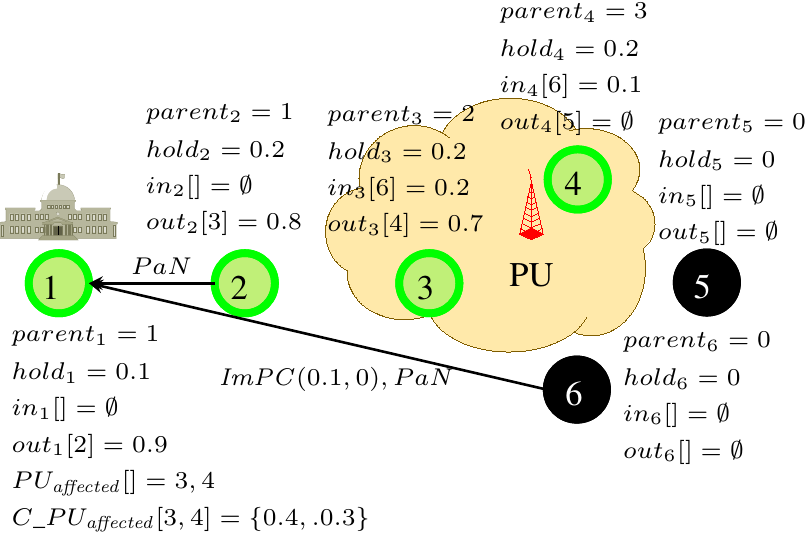}
        \subcaption{}
        \label{fig:exe9}
    \end{minipage}
%10,11
    \begin{minipage}[b]{.5\textwidth}
        \centering
          \includegraphics[width=60mm, height=35mm]{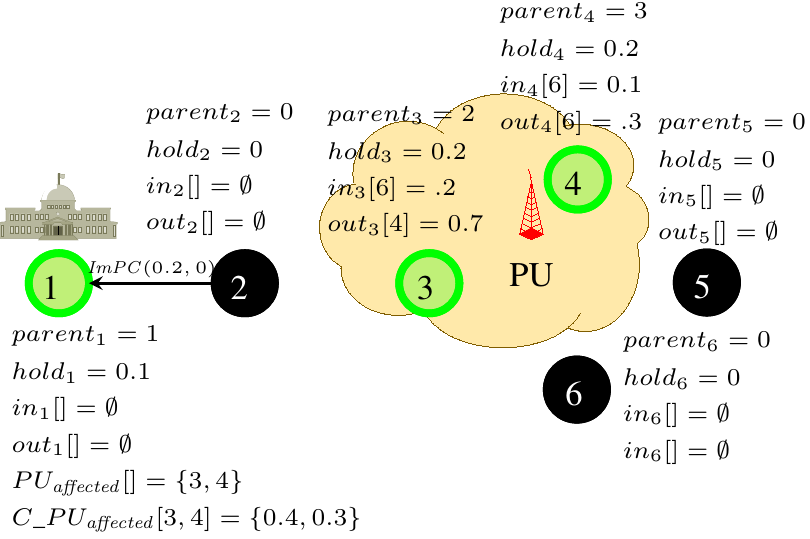}
       \subcaption{}
       \label{fig:exe10}
    \end{minipage}
    \begin{minipage}[b]{.49\textwidth}
        \centering
         \includegraphics[width=60mm, height=35mm]{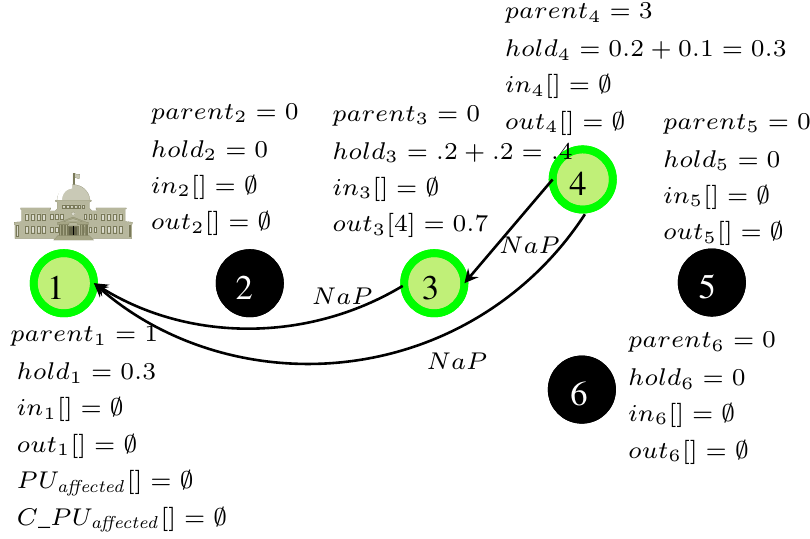}
        \subcaption{}
        \label{fig:exe11}
    \end{minipage}

     \begin{minipage}[b]{.99\textwidth}
        \centering
          \includegraphics[width=120mm]{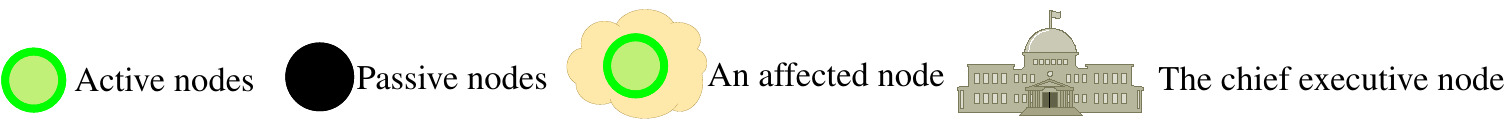}
       \subcaption{Notations}
       \label{fig:notations_with_PU}
    \end{minipage}
\B
\caption{The Working of the \textit{T-CRAN} protocol in the presence of PUs.}
\BBB
\label{fig:exe_with_PU}
\end{figure}
Figure~\ref{fig:exe_with_PU} shows the presence of a PU. Figure~\ref{fig:exe3}, where there is no PU, turns to Figure~\ref{fig:exe8} in the presence of a PU, where nodes 3 and 4 are affected nodes (second guard of $B_1$). In Figure~\ref{fig:exe9}, nodes 6 and 2 detect affected nodes 4 and 3, and inform node 1 using $\mathit{PaN}$ messages (second guard of $B_1$), and node 1 places information of the affected nodes 3, 4 in the respective data structures (according to $B_2$). In addition, node 6 surrenders its credit to node 2 (forth guard of $A_4$).

In Figure~\ref{fig:exe10}, node 2 becomes passive and surrenders its credit to node 1 (second guard of $A_2$). Now, node 1 holds the credit value 1, which was distributed at the time of computation's initiation, in $hold_1$ and $C\_PU_{\mathit{affected}}[]$, and this condition is sufficient to announce the global weak termination after a timeout (according to $C_1$). In Figure~\ref{fig:exe11}, the primary user disappears, and nodes 3, 4 inform node 1 ($C_E$) and ask about the ongoing computation (according to $B_3$). Node 1 informs them and deletes their entry from $PU_{\mathit{affected}}$ and $C\_PU_{\mathit{affected}}[]$, and then, the sum of credits at node 1, 3, and 4 equal to 1.

\section{Conclusion}
\label{section:conclusion}
A termination detection protocol, \textit{T-CRAN}, for an asynchronous multi-hop cognitive radio networks is presented. The \textit{T-CRAN} protocol is capable enough to work on heterogeneous channels, and it can also handle multiple computations simultaneously. The \textit{T-CRAN} protocol is based on credit distribution and aggregation approach. The proposed protocol uses a new kind of logical structure, called the \textit{virtual tree-like structure}. In the \textit{virtual tree-like structure}, a node may surrender its credit to any node (not necessarily to its parent node) that is executing the identical computation. This credit surrender approach significantly reduces the waiting time to announce termination. Further, it is not mandatory for the initiator of the protocol (\textit{i}.\textit{e}., the first root node of the \emph{virtual tree-like structure}) to stay involved until the termination of the computation. Hence, the protocol may witness different root nodes at different time instants, during the course of termination announcement.

The proposed protocol can also be implemented in dynamic networks, \textit{e}.\textit{g}., cellular, mobile ad hoc networks, and vehicular ad hoc networks. The proposed \textit{virtual tree-like structure} is also able to decrease the waiting time to announce termination in dynamic networks, which is a desirable requirement in dynamic networks, due to its flexible credit surrender approach. Moreover, the \textit{virtual tree-like structure} can substitute the conventional tree structures in various distributed computations, \textit{e}.\textit{g}., snapshot, global-state, leader election, message ordering, and group communication.

%\bibliographystyle{abbrv}
%\begin{spacing}{1}
%\setlength\bibitemsep{0pt}
%\bibliography{TCRANRelatedWork}
%\end{spacing}

\appendix

\section{Complexity Analysis}
\label{section:Complexity Analysis}
We analyze our protocol in terms of \textit{message complexity} and \textit{time complexity}. Message complexity is defined in terms of the total number of control messages that are used in our protocol. Time complexity is defined in terms of the time elapsed between the initiation of the protocol and the announcement of termination. Notations used to analyze our protocol are given in Table~\ref{table:notations_analysis}.

\subsection{Message complexity}
\label{subsec:Message Complexity}
We are using eight types of messages in our protocol (see Table~\ref{tab:messages}). We analyze each message separately, except the \textit{Termination Message} ($\mathit{TM}$). The $\mathit{TM}$ is sent by $C_E$ (when $C_E$ receives back credit $C$ that was used at the time of credit distribution) to all the nodes of the interaction graph to declare termination of the computation. Since $\mathit{TM}$ can be broadcasted to all the nodes in a unit time, we ignore to analyze this message. The message complexity of each message is also given in Table~\ref{table:message_complexity}.

\begin{itemize}[noitemsep,nolistsep]
  \item \textit{COMputation message} ($\mathit{COM}(C)$): A node may send $\mathit{COM}(C)$ messages to all its neighboring nodes to distribute the computation. Since there are $N$ nodes in \textit{CRN} and the maximum allowable degree of a node is $\triangle$, $\BigO(N\times\triangle)$ \textit{COMputation message}s can be exchanged in the protocol.

  \item \textit{I am Passive with Credit message} ($\mathit{ImPC}(C,b)$): A node sends an $\mathit{ImPC}(C,b)$ to its parent node, if the parent node is active, and to all the neighboring nodes, whose states are active and had sent credits to the node previously. Since at most $N_{\mathit{neighbor}}$ nodes of a node may execute the same protocol and at most $N_{leave}$ nodes may leave the network, the message complexity of $\mathit{ImPC}(C,b)$ is $\BigO(N_{\mathit{neighbor}}\times N_{leave})$.

  \item \textit{I am Passive message} ($\mathit{ImP}(p)$): A node sends $\mathit{ImP}(p)$ messages to all its child nodes, whose states are active. Since a single node sends at most $(N_{\mathit{neighbor}}-1)$ $\mathit{ImP}(p)$ messages and at most $N_{leave}$ nodes may leave the network, the message complexity of $\mathit{ImP}(p)$ is $\BigO((N_{\mathit{neighbor}}-1)\times N_{leave})$.

  \item \textit{AcKnowledgement message} ($AcK$) and \textit{Acknowledgement of AcK message} ($AAcK$): An $AcK$ and an $AAcK$ is generated in response to an $\mathit{ImPC}(C,b)$; hence, $\BigO(N_{\mathit{neighbor}}\times N_{leave})$ $AcK$ and $AAcK$ messages can be generated in the protocol.
\end{itemize}

\medskip

We also present the total number of non-control messages, as follows:
\begin{itemize}[noitemsep,nolistsep]
  \item \textit{Primary user affected Nodes message} ($\mathit{PaN}$): All the neighboring nodes of an affected node, whose states are active, send $\mathit{PaN}$ messages to $C_E$. Since at most $N_{\mathit{neighbor}}$ nodes of at most $N_{\mathit{affected}}$ nodes may send $\mathit{PaN}$ messages, the message complexity of $\mathit{PaN}$ is $\BigO(N_{\mathit{neighbor}}\times N_{\mathit{affected}})$.

  \item \textit{Nodes released by Primary user message} ($\mathit{NaP}$): After recovery, the affected node sends $\mathit{PaN}$ messages to its neighboring node and $C_E$. Since at most $N_{\mathit{neighbor}}$ and $C_E$ receive $\mathit{PaN}$ messages from $N_{\mathit{affected}}$ nodes, the message complexity of $\mathit{NaP}$ is $\BigO((N_{\mathit{neighbor}}+1)\times N_{\mathit{affected}})$.
\end{itemize}

\begin{table}
\begin{center}
\begin{spacing}{0.85}
    \begin{tabular}{|l|l|}
    \hline
    {\scriptsize Notations}       & {\scriptsize Description   }                                                                     \\ \hline
    {\scriptsize $N$}             & {\scriptsize The total number of the cognitive radio nodes in the network.}                           \\ \hline
    {\scriptsize $\triangle$}     & {\scriptsize Maximum degree of a nodes in the network.}                                           \\ \hline
    {\scriptsize $\mathit{Height}$}        & {\scriptsize Maximum height of the \textit{virtual tree-like structure}.} \\\hline
    {\scriptsize $N_{leave}$}     & {\scriptsize The total number of nodes that can leave the network during the protocol execution.} \\ \hline
    {\scriptsize $N_{\mathit{affected}}$}  & {\scriptsize Maximum number of nodes that are affected during the protocol execution.}            \\ \hline
    {\scriptsize $N_{\mathit{neighbor}}$}  & {\scriptsize Maximum number of neighboring nodes that are executing the same computation.}         \\ \hline
    \end{tabular}
\B
\caption{Notations used in the complexity analysis of the \textit{T-CRAN} protocol.}
\label{table:notations_analysis}
\BBB\BBB
\end{spacing}
\end{center}
\end{table}

\begin{table}
\begin{center}
\begin{spacing}{0.85}
    \begin{tabular}{|l|l|}
    \hline
    {\scriptsize Messages}                          & {\scriptsize Complexity }\\ \hline
    {\scriptsize \textit{COMputation message}   }           &   {\scriptsize $\BigO(N\times\triangle)$  }     \\ \hline
    {\scriptsize \textit{I am Passive with Credit message}}  &  {\scriptsize $\BigO(N_{\mathit{neighbor}}\times N_{leave})$}         \\ \hline
    {\scriptsize \textit{I am Passive message}             } &  {\scriptsize $\BigO((N_{\mathit{neighbor}}-1)\times N_{leave})$ }        \\ \hline
    {\scriptsize \textit{Primary user affected Nodes message}}    &  {\scriptsize $\BigO(N_{\mathit{neighbor}}\times N_{\mathit{affected}})$ }        \\ \hline
    {\scriptsize \textit{Nodes released by Primary user message}} &   {\scriptsize $\BigO((N_{\mathit{neighbor}}+1)\times N_{\mathit{affected}})$}       \\ \hline
    {\scriptsize \textit{AcKnowledgement message}         }  &  {\scriptsize $\BigO(N_{\mathit{neighbor}}\times N_{leave} )$         }\\ \hline
    {\scriptsize \textit{Acknowledgement of AcK message}   }  & {\scriptsize $\BigO(N_{\mathit{neighbor}}\times N_{leave})$         } \\ \hline
    \end{tabular}
\B
\caption{Message complexity of the \textit{T-CRAN} protocol.}
\label{table:message_complexity}
\BBB\BBB
\end{spacing}
\end{center}
\end{table}

\subsection{Time complexity}
\label{subsec:Time Complexity}
There are three types of nodes in the network: (\textit{i}) nodes whose $out[]=\emptyset$, (\textit{ii}) $C_E$, and (\textit{iii}) node whose $out[]\neq \emptyset$ or $in[]\neq \emptyset$ and they are not the chief executive node. The nodes with $out[] = \emptyset$ may leave the network when they finish their computation by sending $\mathit{ImPC}(C,b)$ messages to at most $N_{\mathit{neighbor}}$ nodes. Similarly, $C_E$ may also leave by sending $\mathit{ImPC}(C,b)$ or $\mathit{ImP}(p)$ messages to at most $N_{\mathit{neighbor}}$ nodes. Also, the node other than $C_E$ that has $out[]\neq \emptyset$ or $in[]\neq \emptyset$ exchanges at most $N_{\mathit{neighbor}}$ messages before leaving the network. We assume that all the messages are delivered in a unit time. Hence, in the failure-free network, all the nodes of the network take $\BigO(Height)$ time to leave the network that results in global strong termination declaration. However, the presence of PUs increases termination latency. In such scenarios, the declaration of global weak termination would be delayed according to the value of timeout.

\section{Correctness Proof}
\label{section:correctness_proof}
We first provide the system invariants; afterward, we prove the safety and liveness properties of the \textit{T-CRAN} protocol. We also prove an impossibility result that the appearance of a primary user on a single channel may defy termination forever.

\subsection{System invariants}
\label{subsec:System Invariants}
\begin{inv}
\label{inv:active_passive_rule}
Let, $\mathit{STATE}(i)$ represents the state of $\mathit{CR}_i$, which may be active or passive. For $\mathit{CR}_i$, $hold_i=0$ indicates passive state of $\mathit{CR}_i$ and vice versa. Also, $hold_i\neq 0$ indicates active state of $\mathit{CR}_i$ and vice versa.
\begin{equation*}\label{}
\forall i: hold_i = 0 \Leftrightarrow \mathit{STATE}(i)=\mathit{PASSIVE}, \: \forall i: hold_i \neq 0 \Leftrightarrow \mathit{STATE}(i)=\mathit{ACTIVE}
\end{equation*}
\end{inv}

%\begin{inv}
%\label{inv:in_out_rule}
%For any two nodes $\mathit{CR}_i$ and $\mathit{CR}_j$, if a child node $\mathit{CR}_i$ receive a credit from another node, say $\mathit{CR}_j$, except $\mathit{CR}_i$'s parent node, then the same credit must be known to both the nodes.
%\begin{equation*}\label{}
%  \exists i, j: j \in in_i[j] \Rightarrow  i \in out_j[i]
%\end{equation*}
%\end{inv}

\begin{inv}
\label{inv:credit_never_exceeds_C}
In \textit{CRN}, the sum of credits at the nodes and credits associated with in-transit messages must be $C$.
\begin{equation*}\label{}
\forall i, j \in v: hold_i + hold_j + in_i[] + in_j[] + \mathit{SEND}_i(m,j) + \mathit{SEND}_j(m),i) = C
\end{equation*}
\noindent where, $m$ can be a \textit{COMputation message} or an \textit{I am Passive with Credit message}.
\end{inv}

\begin{inv}
\label{inv:strong_termination}
The global strong termination can be declared, in case, there is no PUs in \textit{CRN}. Thus, only a single $\mathit{CR}_i$ contains credit value $C$ if the node is the chief executive node and there is no in-transit message, $m$, in the global channel set, $\mathit{GCS}$.
\begin{equation*}\label{}
\exists i, \forall j: j \in n, i \in j:: hold_i = C \Leftrightarrow i = C_E \wedge \mathit{STATE}(j) = \mathit{PASSIVE} \wedge m \notin GCS	
\end{equation*}
\end{inv}

\begin{inv}
\label{inv:weak_termination}
For the global weak termination, the total credit value $C$ is known to $C_E$. However, $C$ is distributed among $C_E$ and the affected nodes.
\begin{equation*}\label{} out_{C_{E}}[]= \emptyset \wedge in_{C_{E}}[]= \emptyset \wedge (hold_{C_{E}} + C\_PU_{\mathit{affected}_{C_{E}}}[] = C) \end{equation*}
\end{inv}

\subsection{Safety property}
\label{section:Safety property}
The safety property ensures that in no case a node other than $C_E$ announces termination if the computation has indeed terminated. In order to prove the safety property, we consider all the possible cases that may negate the system invariants and violate the safety requirements, as follows:

\begin{enumerate}[noitemsep]
  \item The incorrect recovery from any failure (\textit{e}.\textit{g}., the appearance of PUs, mobility, and crash) may temporarily falsify Invariants~\ref{inv:credit_never_exceeds_C},~\ref{inv:strong_termination}, and ~\ref{inv:weak_termination}. Lemma~\ref{lemma:state_message} and Lemma ~\ref{lemma:recovery_passive_active_nodes} assert that the incorrect recovery from any failure does not violate the safety requirements.
  \item Before reaching the actual termination, the value of $hold_{C_{E}} = C$ or $hold_{C_{E}} > C$, then Invariant~\ref{inv:strong_termination} or Invariant~\ref{inv:weak_termination} are violated. Lemma~\ref{lemma:credit_does_not_exceed1} and Lemma~\ref{lemma:credit_does_not_exceed2} ensure that $C_E$ holds credit $C$ in case of global strong termination and the credit less than $C$ in case of global weak termination.
\end{enumerate}
The proofs of Lemmas~\ref{lemma:state_message}-~\ref{lemma:credit_does_not_exceed2} guarantee the safety requirements of the \textit{T-CRAN} protocol. The following Lemma~\ref{lemma:state_message} and Lemma~\ref{lemma:recovery_passive_active_nodes} prove that the nodes do not violate the safety requirements on their recovery.

\begin{lemma}
\label{lemma:state_message}
The reception of stale messages, $m_{\langle session, *\rangle}$, at the nodes do not increase credit value $C$ forever, which violates the safety requirements of the \textit{T-CRAN} protocol.
\end{lemma}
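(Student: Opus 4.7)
The plan is to exploit the $\langle session, initiator\_id\rangle$ tuple that is piggybacked on every control and non-control message. First I would observe that this tuple uniquely identifies the computation instance to which the message belongs, so upon arrival each message can be classified as either current-session or stale by comparing it against the recipient's stored $\mathit{session}_i$ value. Because the paper mandates that all data structures reside in non-volatile storage, this check remains meaningful even after a crash-recovery; a stale message is therefore always detectable by its recipient.

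Next I would argue by a case analysis on the message type. For each of $\mathit{COM}$, $\mathit{ImPC}$, $\mathit{ImP}$, $AcK$, $AAcK$, $\mathit{PaN}$, and $\mathit{NaP}$, the credit referenced already belongs to a session whose $C_E$ has declared termination via action $C_1$ or $C_2$; at that moment the total credit value $C$ has been fully reconciled inside $C_E$ (Invariants~\ref{inv:strong_termination} and~\ref{inv:weak_termination}). When a recipient sees a session mismatch, it discards the message without touching $hold$, $in[]$, or $out[]$; hence no credit can be reinjected into any ongoing session, and Invariant~\ref{inv:credit_never_exceeds_C} is preserved. I would also rule out indirect channels for reinjection: a stale $\mathit{ImPC}$ cannot open a new three-way handshake because the session check precedes the $AcK$; a stale $\mathit{NaP}$ cannot cause $C_E$ to re-enter a former affected node into $C\_PU_{\mathit{affected}}[]$ because the mismatched session tag is detected first; and stale $AcK$ or $AAcK$ messages cannot provoke retransmissions, since the timeouts guarding the handshake are local to the session in which they were armed.

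The main obstacle will be the crash-recovery case, where a freshly recovered node may be bombarded by stale messages that reference credit it once owned in a prior session. I would handle it by leaning on the stable-storage copy of $\mathit{session}_i$ and $\mathit{initiator\_id}$ to guarantee that the recovered node still recognizes these messages as stale, together with the fact that the corresponding credit has already been absorbed by $C_E$ at termination. The net effect of any stale message is thus bounded: it is detected and dropped in the same step that receives it, so the credit perceived by any active computation never exceeds $C$ -- and, in particular, any hypothetical transient inflation cannot persist forever, ruling out the violation the lemma forbids.
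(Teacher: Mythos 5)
Your central premise---that every recipient can immediately classify an incoming message as stale by comparing its $\langle session, initiator\_id\rangle$ tag against the locally stored $\mathit{session}_i$ and drop it ``in the same step that receives it''---does not hold under the paper's failure model, and this is exactly the case the lemma is really about. The failure model states that a crashed node, on recovery, \emph{does not possess the knowledge of updated data structures}; a recovered node may therefore still hold $session_i \leq x$ for a session $x$ that has already terminated, in which case a stale $m_{\langle x,*\rangle}$ is indistinguishable from a current message at that node. Such a node will accept the message, resume the (dead) computation, and even propagate further messages tagged with session $x$ to other recovered or unaware neighbours, each of which may do the same. Your proof rules this scenario out by fiat (via the stable-storage argument), but stable storage only guarantees the node's \emph{own last-written} session value survives, not that this value reflects updates that occurred in the network while the node was down. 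Consequently your claim that ``the credit perceived by any active computation never exceeds $C$'' at every instant is stronger than what the protocol actually guarantees, and stronger than what the lemma asserts (note its wording: the credit is not increased \emph{forever}).

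The paper's own proof accepts that a chain of stale-message propagation can occur transiently and instead argues for eventual extinction: the chain must at some point reach a node whose session counter is current---in particular $C_E$, for which $session_{C_E} \neq x$---and that node terminates the flow (Action $A_4$), so any credit inflation is temporary and is detected. To repair your argument you would need to drop the ``immediate detection at every recipient'' step and replace it with this eventual-termination argument for the propagation chain among lagging recovered nodes; the per-message-type case analysis you sketch is then only needed at the node that finally detects the mismatch, not at every hop.
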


\begin{proof}
Assume that on recovery,\footnote{We assume that the recovery process takes non-zero time.} $\mathit{CR}_i$ receives stale messages, $m_{\langle session, *\rangle}=m_{\langle x,* \rangle}$, from unreliable channels or other recovered nodes. The reception of $m_{\langle x, * \rangle}$ at $\mathit{CR}_i$ is able to execute the computation and transmission of $m_{\langle x, * \rangle}$, in case $session_i \leq x$. For the contrary, we assume that a node receives a stale message, $m_{\langle x, *\rangle}$, executes the computation and propagates $m_{\langle x, *\rangle}$. We now prove that the reception of stale messages does not violate the safety requirements, as follows:

$\mathit{CR}_i$ can further distribute the computation or surrender credit after completion of its computation among its neighboring nodes, in response to $m_{\langle x, *\rangle}$. The neighboring node $\mathit{CR}_j$ of $\mathit{CR}_i$ may be a recovered node or unaware of the just terminated computation whose $session=x$. Hence, the recipient $\mathit{CR}_j$ can also behave similar to $\mathit{CR}_i$. However, one of the nodes in the network or $C_E$ terminates the flow of $m_{\langle x, *\rangle}$ due to $session_{C_{E}} \neq x$ (Action $A_4$ in Table~\ref{tab:Actions of credit diffusion-aggregation}).

Hence, the system maintains Invariant~\ref{inv:credit_never_exceeds_C}, and once the credit is greater than $C$, it is detected by some nodes; thus stale messages cannot violate the safety requirements of the \textit{T-CRAN} protocol.
\end{proof}

The following assumptions help us to prove Lemma~\ref{lemma:recovery_passive_active_nodes}: we use four different time instants $\alpha, \beta, \gamma$, and $\delta$ such that $\alpha < \beta < \gamma$ (all the other lemmas will also use these time instances) and three nodes $\mathit{CR}_i$, $\mathit{CR}_j$ and $\mathit{CR}_k$ that are neighbors of each other. $\mathit{CR}_i$ initiates the \textit{T-CRAN} protocol at time $\alpha$ among $\mathit{CR}_j$ and $\mathit{CR}_k$ with $session = x$. Under a fault-free scenario, at time $\gamma$, $\mathit{CR}_i$ announces global strong termination. Suppose, $\mathit{CR}_k$ becomes an-affected node at time $\beta$.
\begin{lemma}
\label{lemma:recovery_passive_active_nodes}
On recovery, the initiation of a node in active or passive state does not result in false termination.
\end{lemma}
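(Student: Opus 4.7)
The plan is to do a case analysis on the state of $CR_k$ at its recovery time $\delta$ (active vs.\ passive) and on whether $C_E$ has already announced some form of termination by $\delta$. In each case the aim is to show that the termination guards $C_1$ and $C_2$ from Table~\ref{tab:Actions of termination announcement} become true only when Invariant~\ref{inv:credit_never_exceeds_C} legitimately holds, so no false termination is declared. I would reuse the three-node setup ($\mathit{CR}_i$, $\mathit{CR}_j$, $\mathit{CR}_k$) and time instants $\alpha<\beta<\gamma$ introduced before the lemma, with $CR_k$ recovering at $\delta$.

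First I would treat $\delta<\gamma$, the case where $CR_k$ recovers while the underlying computation is still in progress. By action $B_3$, $CR_k$ notifies $C_E$ and its still-active neighbors with $\mathit{NaP}$ messages. Action $B_4$ at $C_E$ symmetrically undoes what action $B_2$ did on the earlier $\mathit{PaN}$ receipt: the entry $PU_{\mathit{affected}_{C_E}}[k]$ is cleared and the credit previously logged in $C\_PU_{\mathit{affected}_{C_E}}[k]$ is returned to $CR_k$ through message $m$; an active neighbor $CR_j$ issues the special message $m^{\prime}$ to reclaim the portion of credit it had previously contributed. Whether $CR_k$ wakes up active or passive only affects what $CR_k$ does with the returned credit afterwards (continuing its computation versus immediately surrendering via action $A_4$), but not the aggregate balance in the network, so Invariant~\ref{inv:credit_never_exceeds_C} holds throughout the recovery and the guards of $C_1$, $C_2$ cannot fire prematurely.

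Second, I would handle $\delta>\gamma$, where $C_E$ has already announced some termination before $CR_k$ reappears. Strong termination at time $\gamma$ is ruled out because the guard of $C_2$ demands $C\_PU_{\mathit{affected}_{C_E}}[]=\emptyset$, yet $CR_k$ has been affected since $\beta<\gamma$ and an entry for it must have been present. Hence the earlier announcement was necessarily a weak termination via $C_1$, which by Invariant~\ref{inv:weak_termination} explicitly accounts for the credits of affected nodes and therefore is not a false termination by definition. Any message that $CR_k$ emits after $\delta$ pertains to a session the rest of the network has already closed and is absorbed without harm by Lemma~\ref{lemma:state_message}.

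The main obstacle I expect is the concurrency sub-case within the first scenario: a neighbor $CR_j$'s $m^{\prime}$ request and $CR_k$'s own $\mathit{NaP}$ could race at $C_E$, threatening to add $CR_k$'s credit twice into the books. Ruling this out amounts to arguing that the processing of the first $\mathit{NaP}$ for $CR_k$ atomically wipes $PU_{\mathit{affected}_{C_E}}[k]$ and $C\_PU_{\mathit{affected}_{C_E}}[k]$, so any later $m^{\prime}$ referring to $k$ finds no residual bookkeeping to double-credit, and Invariant~\ref{inv:credit_never_exceeds_C} is preserved.
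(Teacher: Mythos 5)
Your proposal is correct and follows essentially the same route as the paper: a case split on whether $\mathit{CR}_k$ recovers before or after the termination announcement at $\gamma$, combined with the $\mathit{PaN}$/$\mathit{NaP}$ credit bookkeeping of actions $B_1$--$B_4$ and an appeal to Lemma~\ref{lemma:state_message} to dispose of stale messages; the paper additionally lists the ``never recovers'' case, which falls outside the recovery scenario your lemma addresses. Your observation that the $C_2$ guard's requirement $C\_PU_{\mathit{affected}_{C_E}}[]=\emptyset$ rules out a premature strong-termination announcement, and your flagging of the $\mathit{NaP}$/$m^{\prime}$ race, are slightly more explicit than the paper's treatment but do not change the argument.
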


\begin{proof}
We first mention all the possible situations that may exist at the time of transition of a node from an affected node to a non-affected node or vice versa, which may announce false termination. Afterward, we prove that none of these situations can lead to the violation of the safety requirements in our protocol.

\begin{description}[noitemsep]
  \item[\textsc{Case} 1] $\mathit{CR}_k$ is not able to recover, \textit{i}.\textit{e}., $\mathit{CR}_k$ is an affected node for a very long time.
  \item[\textsc{Case} 2] $\mathit{CR}_k$ recovers, due to availability of another available channel in $LCS_k$ or disappearance of the PU, at time $\delta$, where $\delta$ < $\gamma$.
  \item[\textsc{Case} 3] $\mathit{CR}_k$ recovers, due to availability of another available channel in $LCS_k$ or disappearance of the PU, at time $\delta$, where $\delta$ > $\gamma$.
\end{description}

The \textsc{Case} 1 results in permanent failure of $\mathit{CR}_k$, \textit{i}.\textit{e}., $\mathit{CR}_k$ is a crashed node. Hence, the global strong termination is defied forever, and the protocol announces global weak termination of the computation at $\mathit{CR}_i$ and $\mathit{CR}_j$.

The \textsc{Case} 2 results in the global strong termination, when $session_k=session_{C_{E}} (=session_i)$ at the time of recovery of $\mathit{CR}_k$ in active state. However, passive state of the node is irrelevant here, because passive state of $\mathit{CR}_k$ indicates that $\mathit{CR}_k$ has already surrendered its credit before the transition from a non-affected node to an affection node.

In \textsc{Case} 3, $C_E$ has already declared global weak termination before recovery of $\mathit{CR}_k$. Specifically, \textsc{Case} 1 and \textsc{Case} 3 are almost similar and do not affect $C_E$, because $session_{C_{E}} \neq session_k$. In addition, the recovery of $\mathit{CR}_k$ in active state may cause to propagate messages, $m_{\langle session, *\rangle}=m_{\langle x, * \rangle}$, to $\mathit{CR}_i$ or $\mathit{CR}_j$. However, according to Lemma~\ref{lemma:state_message}, $\mathit{CR}_i$, which is $C_E$, discard $m_{\langle x, *\rangle}$ eventually because $seesion_{C_{E}}\neq x$. (For a better understanding, readers may refer to Figure~\ref{fig:recovery_passive_active_nodes})

Thus, on recovery, the nodes' state do not affect the correct termination, and Invariants~\ref{inv:strong_termination} and~\ref{inv:weak_termination} holds.
\end{proof}

%======================================
%Figure 8 for ====================================================lemma 2
\begin{figure}[t]
\centering
    \begin{minipage}[b]{.23\textwidth}
        \centering
          \includegraphics[]{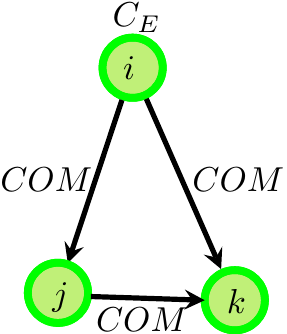}
       \subcaption{At time $\alpha$}
       \label{fig:recovery_passive_active_nodes1}
    \end{minipage}
    \begin{minipage}[b]{.23\textwidth}
        \centering
         \includegraphics[]{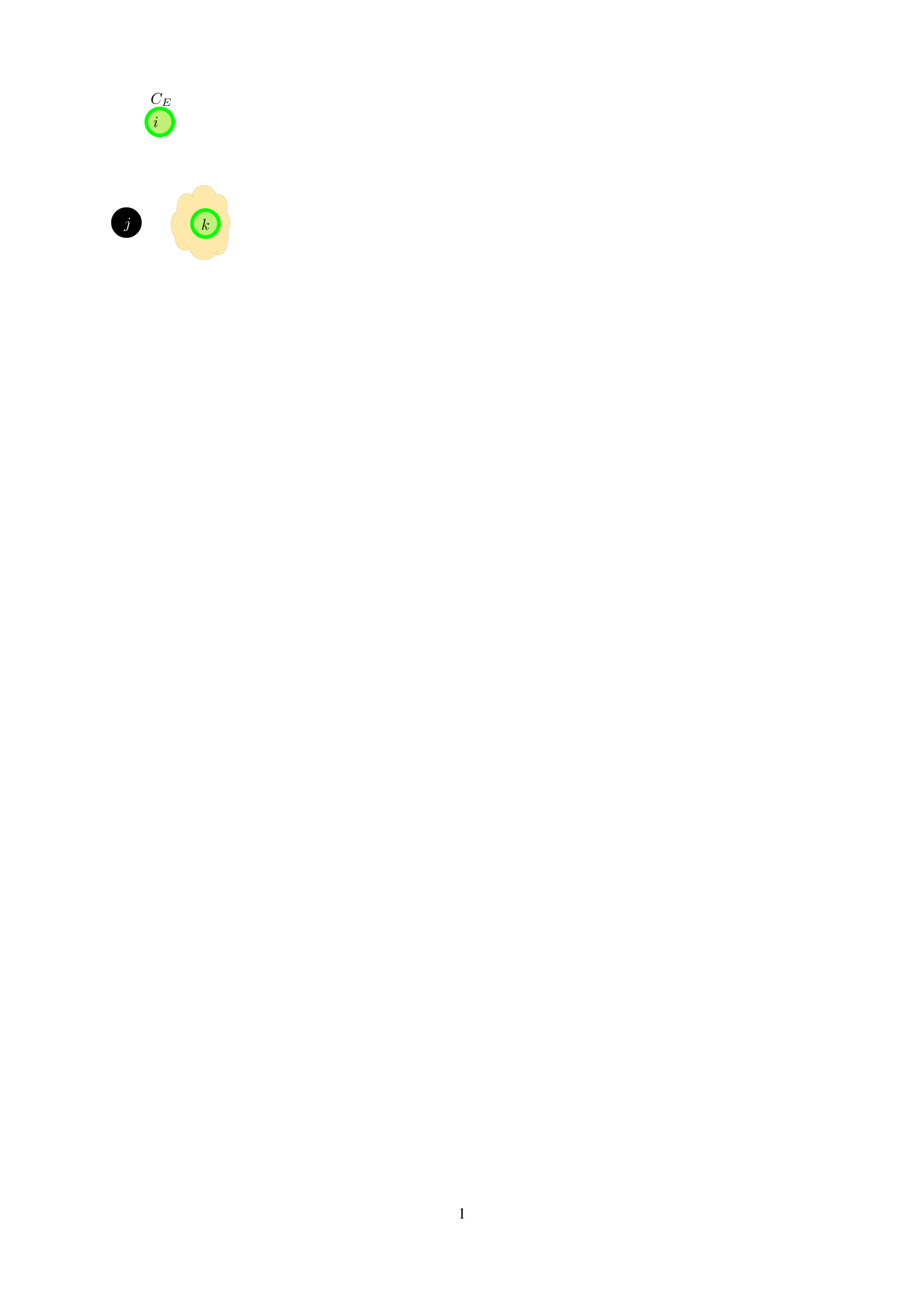}
        \subcaption{At time $\beta$}
        \label{fig:recovery_passive_active_nodes2}
    \end{minipage}
        \begin{minipage}[b]{.23\textwidth}
        \centering
         \includegraphics[]{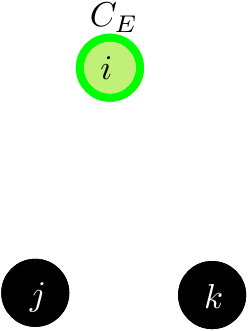}
        \subcaption{At time $\gamma$}
        \label{fig:recovery_passive_active_nodes3}
    \end{minipage}
       \begin{minipage}[b]{.23\textwidth}
        \centering
         \includegraphics[]{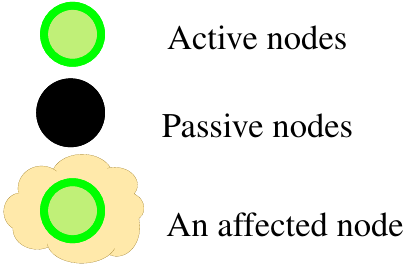}
        \subcaption{Notations}
        \label{fig:recovery_passive_active_nodes4}
    \end{minipage}
\B
\caption{Illustration for the proof of Lemma~\ref{lemma:recovery_passive_active_nodes}.}
\BBB
\label{fig:recovery_passive_active_nodes}
\end{figure}

The credit aggregated at $C_E$ never ever becomes equal to $C$ before the global strong termination is reached. This fact can be justified with the help of Lemma~\ref{lemma:credit_does_not_exceed1} and Lemma~\ref{lemma:credit_does_not_exceed2}, as follows:
%===========================================================================================LEMMA 3 in DOC FILE
\begin{lemma}%LEMMA 3 in DOC FILE
\label{lemma:credit_does_not_exceed1}
Under no condition the credit aggregated at $C_E$ equals to $C$ except in the case of global strong termination.
\end{lemma}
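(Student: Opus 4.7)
The plan is to prove the contrapositive direction along with the forward direction by leveraging the conservation invariant (Invariant~\ref{inv:credit_never_exceeds_C}) and the state-credit correspondence (Invariant~\ref{inv:active_passive_rule}). Suppose for contradiction that at some time instant $\tau$ the chief executive node holds $hold_{C_E} = C$ while the computation has not yet globally strongly terminated. I would enumerate the only two scenarios that could witness such a discrepancy: either (i) some node $\mathit{CR}_j \neq C_E$ is still active or an $in_j[\cdot]/out_j[\cdot]$ entry is non-empty, or (ii) there is an in-transit credit-bearing message $m$ (a $\mathit{COM}$ or $\mathit{ImPC}$) in some channel, or (iii) $C\_PU_{\mathit{affected}_{C_E}}[] \neq \emptyset$. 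I would then rule each out.

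For case (i), by Invariant~\ref{inv:credit_never_exceeds_C} the total credit is conserved at $C$; if $hold_{C_E}=C$ then every other $hold_j$ together with every $in[\cdot]$ and $out[\cdot]$ entry must be identically $0$. By Invariant~\ref{inv:active_passive_rule}, $hold_j = 0$ forces $\mathit{STATE}(j) = \mathit{PASSIVE}$, contradicting the assumption that some other node is active. For case (ii), again by Invariant~\ref{inv:credit_never_exceeds_C} the sum of credits on in-transit messages is $0$, so no credit-bearing message is pending delivery. Here the subtlety is the window between a sender dispatching an $\mathit{ImPC}(C',b)$ and the receiver acknowledging it: I would appeal to the three-way handshake of \textsc{Step} 5 (guard $A_4$ in Table~\ref{tab:Actions of credit diffusion-aggregation}) to argue that credit debit at the sender and credit credit at the receiver are effectively atomic — in particular, while the $\mathit{ImPC}$ is in flight the sender still accounts for the credit (it has not yet zeroed $hold_j$ because the $AcK$/$AAcK$ handshake has not completed), so at no intermediate instant does the credit get double-counted or silently vanish into $C_E$'s hold. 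For case (iii), if $C\_PU_{\mathit{affected}_{C_E}}[]$ were non-empty then by the bookkeeping of $B_2$ the credit recorded for affected nodes would be positive, contradicting $hold_{C_E}=C$ together with Invariant~\ref{inv:credit_never_exceeds_C}.

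Finally, I would rule out the concern that stale messages from a terminated session could inflate $hold_{C_E}$ to $C$ prematurely in a subsequent session: Lemma~\ref{lemma:state_message} already shows such messages are eventually discarded and cannot violate the credit accounting, so they cannot produce a spurious equality $hold_{C_E}=C$. Combining the three subcases yields that $hold_{C_E}=C$ forces every other node passive, no in-transit credit message, and empty $C\_PU_{\mathit{affected}}[]$, which is exactly the guard $C_2$ of Table~\ref{tab:Actions of termination announcement} — i.e., global strong termination.

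The main obstacle I anticipate is case (ii): arguing rigorously that the brief handshake window cannot transiently exhibit $hold_{C_E}=C$ while a surrender is mid-flight. The proof hinges on the invariant being maintained as a local atomic update anchored by the $AcK/AAcK$ exchange, so I would carefully state that the sender only clears its $hold_j$ after receiving the $AcK$ and the receiver only increments after sending the $AcK$, so at every global state either the credit is still at the sender, still on the wire (hence counted in the in-transit term of Invariant~\ref{inv:credit_never_exceeds_C}), or already at the receiver — never simultaneously absent and reflected in $hold_{C_E}$.
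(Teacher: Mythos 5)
Your proposal is sound and reaches the right conclusion, but it takes a genuinely different route from the paper. The paper argues concretely: it sets up a two-node scenario ($\mathit{CR}_i = C_E$ and $\mathit{CR}_j$), tracks the credit through explicit instances of Invariant~\ref{inv:credit_never_exceeds_C} at successive time instants (its equations~\ref{eq:credit_never_exceeds_C}--\ref{eq:credit_at_i_j}), derives $hold_i + hold_j = C$, and then shows that $hold_i = C$ with both nodes still active forces the absurdity $\mathit{CR}_i = \mathit{CR}_j$; it closes by observing that even when $hold_{C_E}=C$, guard $C_2$ withholds the announcement until $C_E$ itself is passive, and asserts (without detail) that the argument generalizes to $n$ nodes. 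You instead argue directly and globally: from conservation (Invariant~\ref{inv:credit_never_exceeds_C}), $hold_{C_E}=C$ forces every other $hold_j$, every $in[\cdot]/out[\cdot]$ entry, every in-transit credit, and every $C\_PU_{\mathit{affected}}$ entry to vanish, and Invariant~\ref{inv:active_passive_rule} then makes every other node passive. Your decomposition buys generality for free (no hand-waved ``generalize to $n$ nodes'' step) and your explicit treatment of the handshake window and of stale messages (via Lemma~\ref{lemma:state_message}) is more careful than anything in the paper's proof --- though note that the paper's pseudocode has the receiver increment $hold$ in $A_5$ \emph{before} sending the $AcK$, while the sender zeros $hold$ only at the end of $A_4$, so your claimed ordering (``the receiver only increments after sending the $AcK$'') does not literally match Table~\ref{tab:Actions of credit diffusion-aggregation}; the window issue is real and neither proof fully discharges it. One small correction to your conclusion: $hold_{C_E}=C$ does not by itself give guard $C_2$, which additionally requires $\mathit{STATE}(C_E)=\mathit{PASSIVE}$ (and indeed Invariant~\ref{inv:active_passive_rule} says $hold_{C_E}=C\neq 0$ means $C_E$ is still active); like the paper, you should close by saying that all \emph{other} nodes are quiescent and the protocol simply waits for $C_E$ itself to become passive before announcing.
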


\begin{proof}
Suppose, only two processors $\mathit{CR}_i$ and $\mathit{CR}_j$ are executing a computation, and $\mathit{CR}_i$ is the chief executive node. $\mathit{CR}_i$ sends credit $C_j$ to $\mathit{CR}_j$, and again, $\mathit{CR}_j$ sends credit $C_i$ to $\mathit{CR}_i$. Thus, according to Invariant~\ref{inv:credit_never_exceeds_C}, the following equation~\ref{eq:credit_never_exceeds_C} holds true:
\begin{equation}\label{eq:credit_never_exceeds_C}
hold_i + \mathit{SEND}_i(COM(C_j),j) + hold_j + \mathit{SEND}_j(COM(C_i),i) = C
\end{equation}

Suppose at time $\alpha$, $\mathit{CR}_i$ becomes active. Thus, $\mathit{SEND}_i(COM(C_j),j) = 0$. However, at time $\beta$, the following equation~\ref{eq:credit_at_i_from_i_to_j_and_at_j} holds true:
\begin{equation}\label{eq:credit_at_i_from_i_to_j_and_at_j}
hold_i + \mathit{SEND}_i(COM(C_j),j) + hold_j = C
\end{equation}
The above equation~\ref{eq:credit_at_i_from_i_to_j_and_at_j} indicates credit distribution using a \textit{COMputation message} from $\mathit{CR}_i$ to $\mathit{CR}_j$. However, once $\mathit{CR}_j$ receives the \textit{COMputation message}, then $\mathit{SEND}_i(COM(C_j),j) = 0$. Thus, the following equation~\ref{eq:credit_at_i_j} holds ture:
\begin{equation}\label{eq:credit_at_i_j}
hold_i + hold_j = C	
\end{equation}
%===========================figure 9 for =============lemma 3

\begin{figure}[t]
\centering
    \begin{minipage}[b]{.23\textwidth}
        \centering
          \includegraphics[]{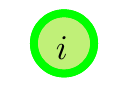}
       \subcaption{At time $\alpha$}
       \label{fig:fig:credit_does_not_exceed1aa}
    \end{minipage}
    \begin{minipage}[b]{.23\textwidth}
        \centering
         \includegraphics[]{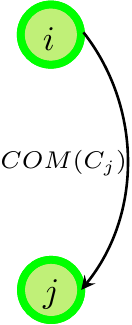}
        \subcaption{At time $\beta$}
        \label{fig:fig:credit_does_not_exceed1bb}
    \end{minipage}
        \begin{minipage}[b]{.23\textwidth}
        \centering
         \includegraphics[]{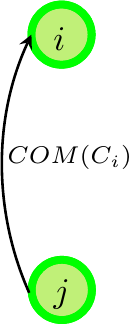}
        \subcaption{}
        \label{fig:fig:credit_does_not_exceed1cc}
    \end{minipage}
       \begin{minipage}[b]{.23\textwidth}
        \centering
         \includegraphics[]{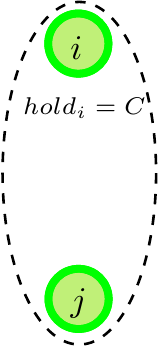}
        \subcaption{At time $\gamma$}
        \label{fig:fig:credit_does_not_exceed1dd}
    \end{minipage}
\B
\caption{Illustration for the proof of Lemma~\ref{lemma:credit_does_not_exceed1}.}
\BBB
\label{fig:credit_does_not_exceed1}
\end{figure}

Assume the contrary, at a later time $\gamma$, $hold_i = C$, and $\mathit{CR}_i$, which is the chief executive node, declares global strong termination, while $\mathit{STATE}(\mathit{CR}_i) = \mathit{STATE}(\mathit{CR}_j) = \mathit{ACTIVE}$. It is possible only when $\mathit{CR}_i$ and $\mathit{CR}_j$ are two processes at an identical node, \textit{i}.\textit{e}., $\mathit{CR}_i = \mathit{CR}_j$. However, the chief executive node, $\mathit{CR}_i$, never declares global strong termination despite $hold_i=C$, unless $\mathit{STATE}(\mathit{CR}_i) = \mathit{PASSIVE}$ (Action $C_2$, Table~\ref{tab:Actions of termination announcement}). Therefore, the protocol never declares termination unless all the nodes are passive, and $hold_{C_{E}}=C$. (For a better understanding, readers may refer to Figure~\ref{fig:credit_does_not_exceed1}.)

The above proof can be generalized for any number of participating nodes. Thus, the protocol always aggregates correct credit in case of the global strong termination.
\end{proof}

The credit aggregated at $C_E$ never becomes (\textit{i}) greater than or equals to $C$, in case of the global weak termination and, (\textit{ii}) greater than $C$, in case of the global strong termination. We call the \textit{wrong credit aggregation} when $hold_{C_{E}}\geq C$ in case of the global weak termination and $hold_{C_{E}}>C$ in case of the global strong termination. Both the above facts are proved by Lemma~\ref{lemma:credit_does_not_exceed2}, as follows:
%====================================================================LEMMA 4 in DOC FILE
\begin{lemma}
\label{lemma:credit_does_not_exceed2}
No message transmission ever results in wrong credit aggregation in weak or strong termination declaration.
\end{lemma}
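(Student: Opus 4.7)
The plan is to prove the lemma by exhaustive case analysis on the message types that can ever increase $hold_{C_{E}}$ or $C\_PU_{\mathit{affected}}[]$, combined with Invariant~\ref{inv:credit_never_exceeds_C}. Only \emph{I am Passive with Credit} ($ImPC$) messages and \emph{Primary user affected Nodes} ($PaN$) messages contribute to these quantities, so the argument reduces to showing that each unit of credit is counted at $C_E$ at most once along each of these two paths. The easy half follows from Invariant~\ref{inv:credit_never_exceeds_C}: if credit were never duplicated, the sum $hold_{C_{E}} + C\_PU_{\mathit{affected}}[] + (\text{credit at other active nodes}) + (\text{credit in-transit})$ would be identically $C$, so wrong aggregation at $C_E$ can only arise if a single credit unit is accidentally absorbed twice.

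First I would analyze the credit-surrender handshake invoked from Action $A_{4}$. When $\mathit{CR}_i$ surrenders $hold_i$ to $\mathit{CR}_j$ via $ImPC$, there are exactly four outcomes of the three-way handshake: (a) both $AcK$ and $AAcK$ are delivered, so $\mathit{CR}_i$ zeros $hold_i$ and $\mathit{CR}_j$ absorbs the credit exactly once; (b) $ImPC$ itself is lost, so $\mathit{CR}_j$ never absorbs the credit and after timeout $\mathit{CR}_i$ resends to some $z'$; (c) $AcK$ is lost, triggering the same resend to $z'$ by $\mathit{CR}_i$; (d) $AAcK$ is lost, triggering the special message to $C_E$ from $\mathit{CR}_j$. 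Cases (a), (b), (d) are clearly safe. The main obstacle is case (c) intertwined with case (d): here $\mathit{CR}_j$ actually holds the credit while $\mathit{CR}_i$ resends it to $z'$, so both nodes temporarily hold the same credit. I would argue that in exactly this branch $\mathit{CR}_j$ detects the missing $AAcK$ after its local timeout and fires the special message to $C_E$ reporting the surrender; then $C_E$ treats the subsequent aggregate it receives from $z'$ (or the chain downstream of $z'$) as a duplicate and cancels it when performing the termination check. The key sub-claim is that exactly one of $\mathit{CR}_j$'s path to $C_E$ and $z'$'s path to $C_E$ is eventually discounted, so that $C_E$ records the unit exactly once; this is where I expect the bulk of the technical bookkeeping to sit.

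For the $PaN$ path I would observe that $C_E$ indexes $PU_{\mathit{affected}}[]$ and $C\_PU_{\mathit{affected}}[]$ by the identity of the affected node rather than by the sender of the $PaN$, so multiple $PaN$s reporting the same affected $\mathit{CR}_k$ are idempotent and cannot inflate $C\_PU_{\mathit{affected}}[]$. Moreover, when $\mathit{CR}_k$ later recovers, Action $B_{4}$ first wipes the entry at $C_E$ before notifying neighbors to reinsert the credit into their $in[]/out[]$, so the credit transits cleanly from $C\_PU_{\mathit{affected}}[]$ back into the normal credit pool without overlap. Combined with Lemma~\ref{lemma:state_message}, which rules out inflation from stale messages, and Lemma~\ref{lemma:recovery_passive_active_nodes}, which rules out inflation from recovered nodes, this exhausts the ways a $PaN$/$NaP$ exchange could produce wrong aggregation.

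Putting the two halves together, at every instant at $C_E$ we have $hold_{C_{E}} + C\_PU_{\mathit{affected}}[] \le C$, with equality only when no credit remains at any other active node and no message is in-transit. In the strong-termination branch the $C\_PU_{\mathit{affected}}[]$ summand is empty, forcing $hold_{C_{E}} = C$ exactly; in the weak-termination branch $hold_{C_{E}} < C$ strictly because $C\_PU_{\mathit{affected}}[]$ is nonempty. The hardest step of the argument, and the one I would write out in most detail, is formalising the bookkeeping that resolves the overlap in case (c)+(d) without assuming anything stronger than the non-FIFO, unreliable channel model declared in Section~\ref{section:the_system_model}.
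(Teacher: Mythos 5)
Your proposal is correct and follows essentially the same route as the paper's proof: reduce wrong aggregation to (i) stale messages, dispatched by Lemma~\ref{lemma:state_message}, and (ii) duplicate surrender of a single credit arising from a failed three-way handshake, resolved by the special message to $C_E$ that cancels the doubly-counted $hold_i$ (the paper's ``Case 2''), with the remaining handshake failure mode excluded by the atomicity assumption on $AcK$/$AAcK$. The bookkeeping step you flag as the hard part is exactly the point the paper also asserts rather than fully formalises; your added remarks on the idempotency of $\mathit{PaN}$ indexing are a harmless elaboration beyond what the paper writes.
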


\begin{proof}
The conditions that may lead to wrong credit collection at $C_E$ are the aggregation of an identical credit at more than one node, and the stale messages in the network. However, we have proved that the stale messages are eventually discarded (Lemma~\ref{lemma:state_message}). Hence, we consider the aggregation of an identical credit at more than one node.

We first present a scenario that may lead to multiple times credit surrender of an identical credit at two different nodes. Note that duplicate message reception at a node is handled in the protocol; hence, we not consider it. We present two possible cases that may lead to wrong credit aggregation, and following that we prove by contradiction that these cases never arise in the protocol.

Suppose, an ongoing computation with $session = x$, where $\mathit{CR}_i$ completes its computation and sends an $\mathit{ImPC}(hold_i, b)$ to $\mathit{CR}_j$. In the meantime, suppose $\mathit{CR}_j$ becomes an affected or a failed node (see Failure Model, Section~\ref{section:the_system_model}). Thus, $\mathit{CR}_j$ cannot send an $AcK$ to $\mathit{CR}_i$. Due to non-reception of an $AcK$ from $\mathit{CR}_j$, $\mathit{CR}_i$ sends an $\mathit{ImPC}(hold_i, b)$ to another node, say $\mathit{CR}_k$. However, the recovery of $\mathit{CR}_j$ and the reception of an $\mathit{ImPC}(hold_i, b)$ at $\mathit{CR}_j$ signify that $hold_i$ is surrendered at two different nodes. However, in the protocol, there are only two possible cases of the \textit{imperfect credit surrender}, as follows:
%===============================================
%Figure 10==============================================figure 10======for lemma 4
\begin{figure}[t]
\centering
    \begin{minipage}[b]{.24\textwidth}
        \centering
          \includegraphics[]{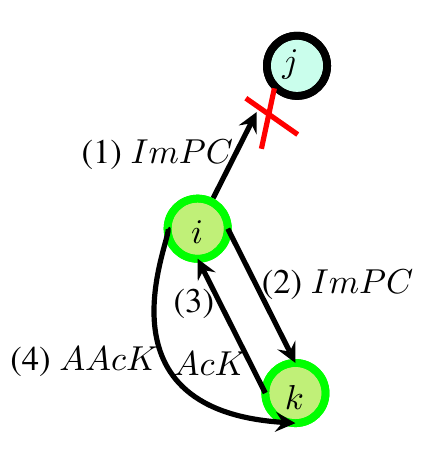}
       \subcaption{}
       \label{fig:credit_does_not_exceed21}
    \end{minipage}
    \begin{minipage}[b]{.24\textwidth}
        \centering
         \includegraphics[]{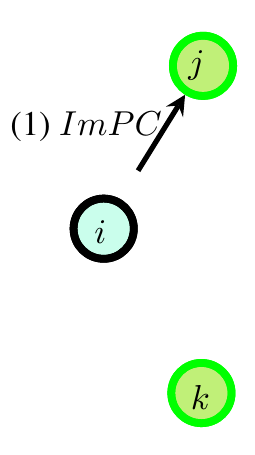}
        \subcaption{}
        \label{fig:credit_does_not_exceed22}
    \end{minipage}
        \begin{minipage}[b]{.24\textwidth}
        \centering
         \includegraphics[]{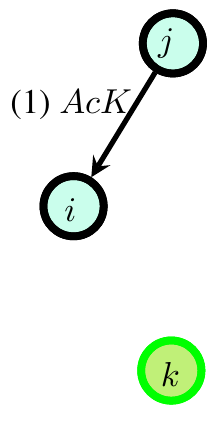}
        \subcaption{}
        \label{fig:credit_does_not_exceed23}
    \end{minipage}
       \begin{minipage}[b]{.24\textwidth}
        \centering
         \includegraphics[]{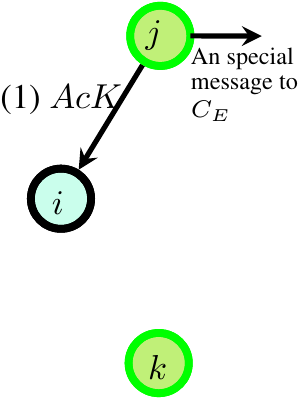}
        \subcaption{}
        \label{fig:credit_does_not_exceed24}
    \end{minipage}

    ~\\

       \begin{minipage}[b]{.99\textwidth}
        \centering
         \includegraphics[]{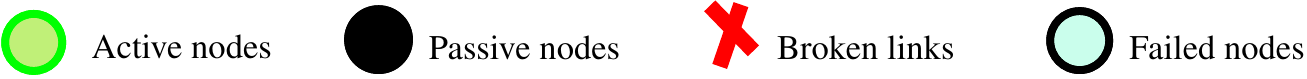}
        \subcaption{Notations}
        \label{fig:ncredit_does_not_exceed25}
   \end{minipage}
\B
\caption{Illustration for the proof of Lemma~\ref{lemma:credit_does_not_exceed2}.}
\BBB
\label{fig:credit_does_not_exceed2}
\end{figure}

\begin{description}[noitemsep]
  \item[\textsc{Case} 1] $\mathit{CR}_j$ is an affected or a failed node after sending an $AcK$ to $\mathit{CR}_i$, and $\mathit{CR}_i$ is already an affected or a failed node after the transmission of an $\mathit{ImPC}(hold_i, b)$.
  \item[\textsc{Case} 2] $\mathit{CR}_i$ sends an $\mathit{ImPC}(hold_i, b)$ to $\mathit{CR}_j$, and due to the absence of an $AcK$ from $\mathit{CR}_j$, $\mathit{CR}_i$ sends an $\mathit{ImPC}(hold_i, b)$ to $\mathit{CR}_k$. After sending $\mathit{ImPC}(hold_i, b)$ messages to two different nodes, $\mathit{CR}_i$ becomes an affected node. Also, $\mathit{CR}_j$ receives the $\mathit{ImPC}(hold_i, b)$.
\end{description}

The reception of an $AcK$ and an $AAcK$ is assumed to be an atomic operation, \textit{i}.\textit{e}., $\mathit{CR}_j$ is not allowed to move to a different location before the reception of an $AAcK$ or a timeout, and also, $\mathit{CR}_j$ receives an $AAcK$ in the presence of PUs. Therefore, \textsc{Case} 1 never holds true.

The \textsc{Case} 2 is essentially an outcome of the \textsc{Case} 1. $\mathit{CR}_j$ receives an $\mathit{ImPC}(hold_i, b)$ and sends an $AcK$ to $\mathit{CR}_i$. As the affected node $\mathit{CR}_i$ cannot receive an $AcK$ from $\mathit{CR}_j$; after a timeout value, $\mathit{CR}_j$ recognizes that $\mathit{CR}_i$ is an affected node. Thus, $\mathit{CR}_j$ sends a special message, $m$, with credit $hold_i$, to $C_E$. Eventually, $C_E$ subtracts $hold_i$ from $hold_{C_{E}}$. On recovery, $\mathit{CR}_i$ again surrender its credit (Lemma~\ref{lemma:recovery_passive_active_nodes}). On the other hand, $\mathit{CR}_i$ receives an $AcK$ from $\mathit{CR}_j$ though it has already surrendered its credit to $\mathit{CR}_k$ earlier; thus, $\mathit{CR}_i$ behaves as an affected node. Therefore, the credit of $\mathit{CR}_i$ remains a constant in the network. (For a better understanding, readers may refer to Figure~\ref{fig:credit_does_not_exceed2}.)

Thus, Invariants~\ref{inv:strong_termination} and~\ref{inv:weak_termination} are preserved, and an imperfect surrender of a credit is infeasible in the \textit{T-CRAN} protocol.
\end{proof}

\subsection{Liveness property}
\label{section:Liveness Property}
In the \textit{T-CRAN} protocol, $C_E$ eventually announces termination in finite time. In order to prove the liveness, we show:
\begin{enumerate}[noitemsep]
  \item The \textit{virtual tree-like structure} does not grow infinitely.
  \item The height of the \textit{virtual tree-like structure} eventually reduces, to (\textit{i}) one, in case of the global strong termination, and (\textit{ii}) two, in case of the global weak termination.
\end{enumerate}

We show that the \textit{virtual tree-like structure} grows and has a finite height. $C_E$, \textit{i}.\textit{e}., the root of the \textit{virtual tree-like structure}, expands the computation among $N$ nodes (using $A_2$ and $A_3$, Table~\ref{tab:Actions of credit diffusion-aggregation}). The distribution of the computation and credit increases the height of the \textit{virtual tree-like structure}, and the maximum height of the \textit{virtual tree-like structure} can be $N$. However, the joining of new nodes in the network during the computation execution may increase the total number of CRs and the maximum height to $N+k, k>0$. In this manner, the \textit{virtual tree-like structure} grows infinitely. However, once the nodes stop to join the network, then the \textit{virtual tree-like structure} does not grow infinitely.

We now show that the height of the \textit{virtual tree-like structure} eventually reduces. The nodes are not allowed to delay the computation for an infinite time. Hence, once all the nodes (except $C_E$), whose heights are identical, complete their computation, they send $\mathit{ImPC}(C,b)$ and $\mathit{ImP}(p)$ messages to their parent node or to any number of neighboring nodes, if they are active (Action $A_4$, Table~\ref{tab:Actions of credit diffusion-aggregation}). Thus, the transmission of $\mathit{ImPC}(C,b)$ and $\mathit{ImP}(p)$ messages by all the nodes (except $C_E$), whose heights are identical (not necessary at an identical time), results in reduction of the height of the \textit{virtual tree-like structure} by at least 1.
\begin{figure}
\centering
    \begin{minipage}[b]{.5\textwidth}
        \centering
          \includegraphics[width=60mm, height=35mm]{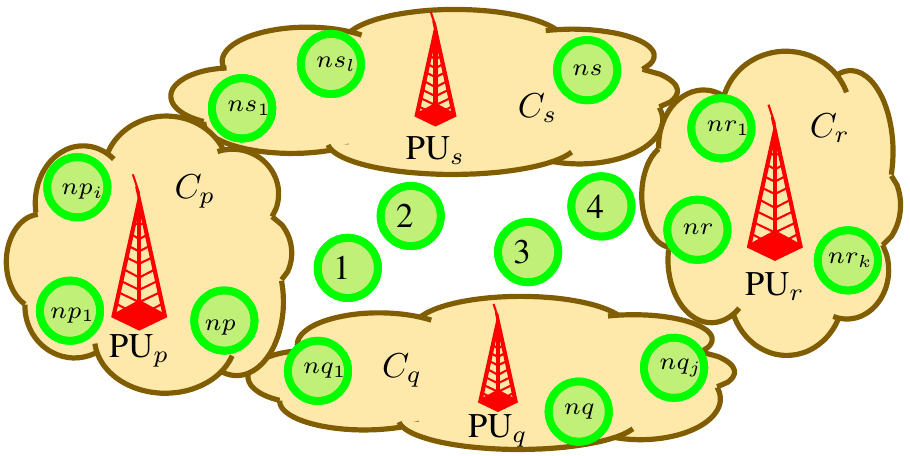}
       \subcaption{}
       \label{fig:Illustrating the impossibility of termination in the cognitive radio networka}
    \end{minipage}
    \begin{minipage}[b]{.49\textwidth}
        \centering
         \includegraphics[width=60mm, height=35mm]{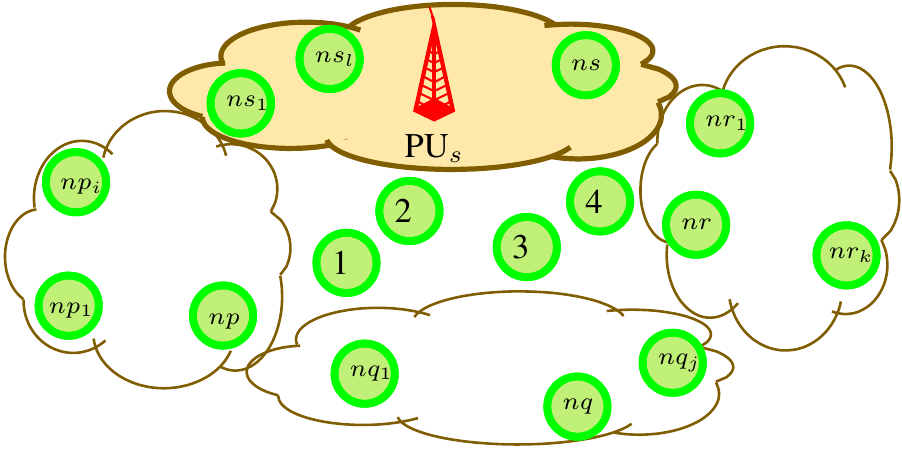}
        \subcaption{}
        \label{fig:Illustrating the impossibility of termination in the cognitive radio networkb}
    \end{minipage}
     \begin{minipage}[b]{.5\textwidth}
        \centering
          \includegraphics[width=60mm, height=35mm]{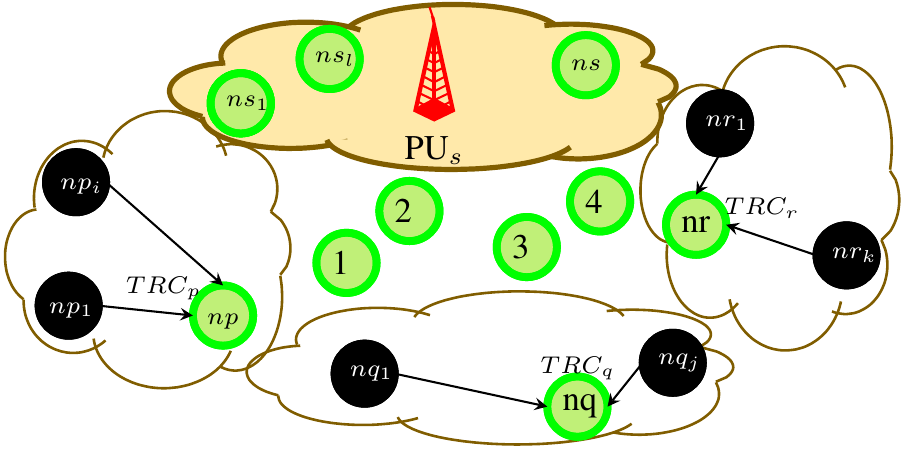}
       \subcaption{}
       \label{fig:Illustrating the impossibility of termination in the cognitive radio networkc}
    \end{minipage}
    \begin{minipage}[b]{.49\textwidth}
        \centering
         \includegraphics[width=60mm, height=35mm]{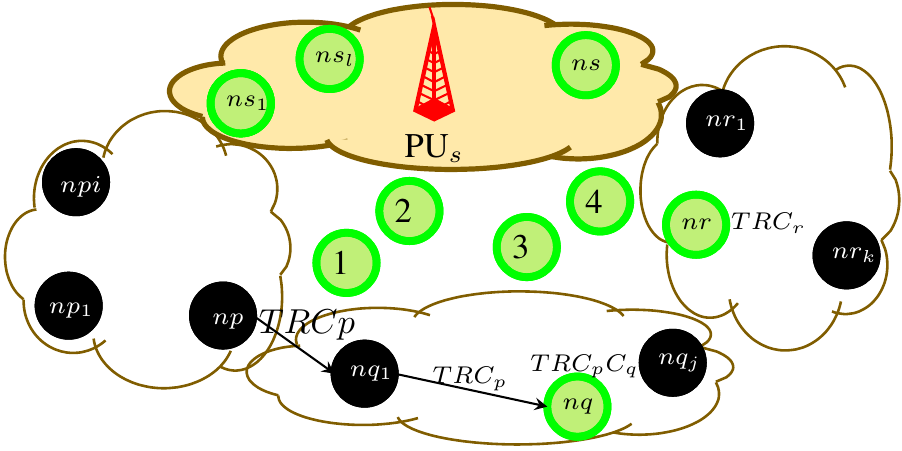}
        \subcaption{}
        \label{fig:Illustrating the impossibility of termination in the cognitive radio networkd}
    \end{minipage}
     \begin{minipage}[b]{.99\textwidth}
        \centering
          \includegraphics[width=60mm, height=35mm]{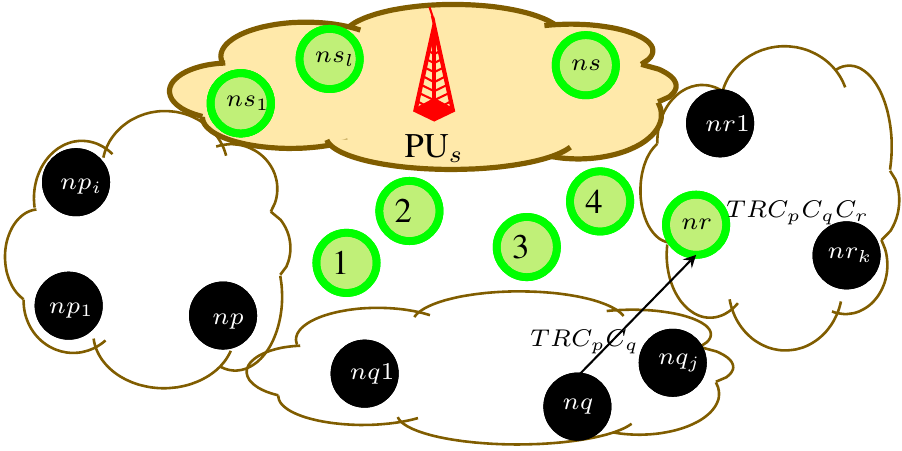}
       \subcaption{}
       \label{fig:Illustrating the impossibility of termination in the cognitive radio networke}
    \end{minipage}
    \begin{minipage}[b]{.99\textwidth}
        \centering
         \includegraphics{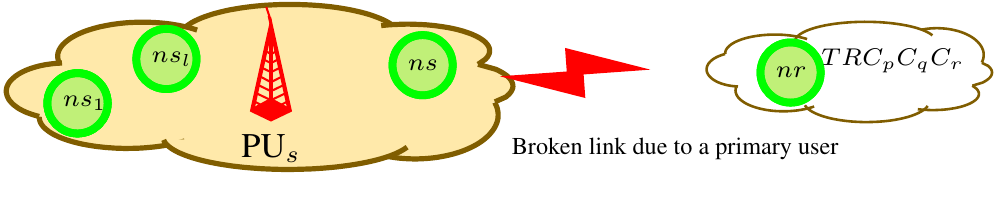}
        \subcaption{}
        \label{fig:Illustrating the impossibility of termination in the cognitive radio networkf}
    \end{minipage}
     \begin{minipage}[b]{.99\textwidth}
        \centering
          \includegraphics{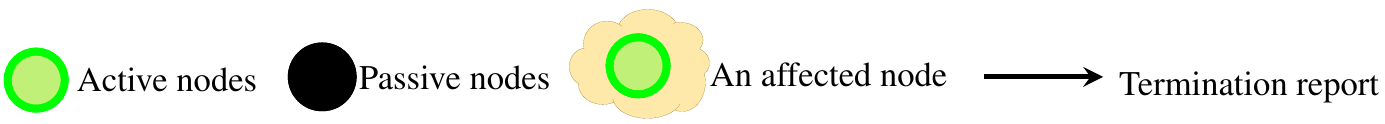}
       \subcaption{Notations}
       \label{fig:notation_Illustrating the impossibility of termination in the cognitive radio network}
    \end{minipage}
\B
\caption{Illustrating the impossibility of termination in cognitive radio networks.}
\B
\label{fig:Illustrating the impossibility of termination in the cognitive radio network}
\end{figure}

\begin{figure}
\centering
          \includegraphics{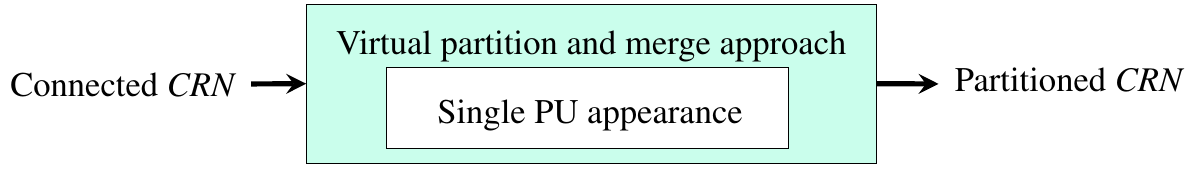}
\B
\caption{The virtual partition and merge approach.}
\BBB
\label{fig:virtual_partition_merge}
\end{figure}

We now show that when the \textit{virtual tree-like structure} has height 1 after credit aggregation, it is a sufficient condition to announce the global strong termination. From the previous facts, it is clear that the height of the \textit{virtual tree-like structure} reduces by at least 1 when all the nodes (except $C_E$), whose heights are identical, send $\mathit{ImPC}(C,b)$ and $\mathit{ImP}(p)$ messages. Hence, when all the nodes of all the height levels (except $C_E$), send $\mathit{ImPC}(C,b)$ and $\mathit{ImP}(p)$ messages that result in the height of the \textit{virtual tree-like structure} to be 1, eventually, and only a single node, $C_E$, holds the complete credit (that is equal to the credit that was distributed at the time of initiation). This fact is enough to show that at the time of the global strong termination the \textit{virtual tree-like structure} has height 1.

We now show that when the \textit{virtual tree-like structure} has height 2 after credit aggregation, it is a sufficient condition to announce the global weak termination. From the previous facts, it is clear that the height of the \textit{virtual tree-like structure} reduces when the non-affected nodes sends $\mathit{ImPC}(C,b)$ and $\mathit{ImP}(p)$ messages. In addition, only a single non-affected node, $C_E$, holds the credit of all the non-affected nodes eventually. Since the affected nodes cannot send $\mathit{ImPC}(C,b)$ and $\mathit{ImP}(p)$ messages, the network is divided into two partitions as: $CRN_P$ and $CRN_N$ (Figure~\ref{fig:The_abstract_view_of_a_primary_user appearance in the cognitive radio network.}). Hence, the \textit{virtual tree-like structure} has height 2, and it is sufficient for the global weak termination.

\subsection{The impossibility of termination}
\label{section:The Impossibility of Termination}
We provide an abstract view to show the impossibility of the global strong termination in the presence of a single primary user. By this abstract view, it will be clear that the appearance of a primary user is difficult to handle than mobility and crash of nodes. (Note that in a purely asynchronous \emph{CRN}, the global strong termination is impossible~\cite{DBLP:journals/jacm/FischerLP85} to detect even if a single primary user exists in the network.)

We consider a \textit{CRN} as a connected communication graph that has nodes $np, np_1, \ldots, np_i$, $nq, nq_1, \ldots, nq_j$, $nr, nr_1, \ldots, nr_k$, $ns, ns_1, \ldots, ns_l$; and no node is assumed to be special. (The node ids are selected in a special way to help readers to understand the abstract view, which will be clear soon.) Also, we assume four PUs, namely PU$_p$, PU$_q$, PU$_r$, and PU$_s$ that affect all the nodes. The \textit{virtual clustering} is performed by considering $np$, $nq$, $nr$, and $ns$ as fixed centers~\cite{erciyes2004cluster} (or cluster heads) that partition the communication graph into four virtual clusters, say $C_p, C_q, C_r$, and $C_s$, see Figure~\ref{fig:Illustrating the impossibility of termination in the cognitive radio networka}.

Now, assume that three PUs, namely PU$_p$, PU$_q$, and PU$_r$, disappear from the network. Thus, the nodes, namely $np, np_1, \ldots, np_i$, $nq, nq_1, \ldots, nq_j$, $nr, nr_1, \ldots, nr_k$, become non-affected nodes, Figure~\ref{fig:Illustrating the impossibility of termination in the cognitive radio networkb}. All the non-affected nodes of each cluster send $\mathit{ImPC}(C,b)$ messages to the respective cluster heads. Hence, each cluster head holds a termination report (or credit) of its cluster, namely $np$ has $TRC_p$, $nq$ has $TRC_q$, and $nr$ has $TRC_r$, Figure~\ref{fig:Illustrating the impossibility of termination in the cognitive radio networkc}.

In order to announce the global strong (or weak) termination in the network, it is required to aggregate all the termination reports (or credits) of each cluster head. Thus, $np$ sends $TRC_p$ to its neighboring virtual cluster head $nq$, and $nq$ aggregates the received credit as: $TRC_pC_q = TRC_p \cup TRC_q$, Figure~\ref{fig:Illustrating the impossibility of termination in the cognitive radio networkd}. This state of the network is equivalent to the \textit{virtual merging} of both the virtual clusters $C_p$ and $C_q$ in one virtual cluster, where $C_q$ is a virtual cluster head. Similarly, $TRC_pC_q$ is aggregated with $TRC_r$ at $nr$ as: $TRC_pC_qC_r = TRC_pC_q \cup TRC_r$, Figure~\ref{fig:Illustrating the impossibility of termination in the cognitive radio networke}.

Since a PU, PU$_s$, persists in the network and $ns, ns_1, \ldots, ns_l$ are affected node, an aggregated termination report (or credit) of the network cannot be generated. The network is now partitioned into two parts, where the first part holds credit $TRC_pC_qC_r$ at $nr$ and the remaining credit is distributed among the affected nodes, $ns, ns_1, \ldots, ns_l$. This state of the network is equivalent to the \textit{virtual partitioning} of the network into two virtual clusters: one virtual cluster, where $C_p$, $C_q$, and $C_r$ are virtually merged and $nr$ is a virtual cluster head, and the another virtual cluster $C_s$ with $ns$ as a virtual cluster head, Figure~\ref{fig:Illustrating the impossibility of termination in the cognitive radio networkf}.

Therefore, it is shown that the existence of a single PU results in two isolated sub-networks, which is sufficient to prevent the protocol to announce the global termination. Without loss of generality, the above \textit{virtual partition-merge} technique (Figure~\ref{fig:virtual_partition_merge}) can be applied to a network of arbitrary size and arbitrary number of PUs.

\textbf{Note}: When the chief executive node, $C_E$, becomes an affected node, neither strong nor weak termination detection is possible.

\end{document}